\renewcommand{\epsilon}{{\varepsilon}}
\newcommand{\R}{{\mathbb{R}}}
\newcommand{\N}{{\mathbb{N}}}
\newcommand{\e}{\textup{e}}
\newcommand{\eps}{{\varepsilon}}
\newcommand{\ignore}[1]{}
\newcommand{\cR}{{\mathcal{R}}}
\DeclareMathOperator{\eq}{eq}
\DeclareMathOperator{\col}{col}
\newtheorem{theorem}{Theorem}
\newtheorem{lemma}[theorem]{Lemma}
\newtheorem{corollary}[theorem]{Corollary}
\newtheorem{claim}[theorem]{Claim}
\newcommand{\assign}{\leftarrow}
\newcommand{\ifLongCites}[2]{#1}
\newcommand{\ifLongCites}[2]{#2}
\begin{document}

\title{Playing Mastermind with Many Colors}

\author{Benjamin Doerr$^1$ \and Carola Doerr$^{1,2}$ \and Reto Sp\"{o}hel$^1$ \and Henning Thomas$^3$}

\date{$^1$Max-Planck-Institut f{\"u}r Informatik, 66123 Saarbr{\"u}cken, Germany \\
$^2$Universit\'{e} Paris Diderot, LIAFA, Paris, France\\
$^3$Institute of Theoretical Computer Science, ETH Z\"urich, 8092 Z\"urich, Switzerland}

\maketitle

\begin{abstract}
We analyze the general version of the classic guessing game  Mastermind with $n$~positions and $k$~colors. Since the case $k \le n^{1-\varepsilon}$, $\varepsilon>0$ a constant, is well understood, we concentrate on larger numbers of colors. For the most prominent case $k = n$, our results imply that Codebreaker can find the secret code with $O(n \log \log n)$ guesses. This bound is valid also when only black answer-pegs are used. It improves the $O(n \log n)$ bound first proven by Chv\'atal (Combinatorica 3 (1983), 325--329). We also show that if both black and white answer-pegs are used, then the $O(n \log\log n)$ bound holds for up to $n^2 \log\log n$ colors. These bounds are almost tight as the known lower bound of $\Omega(n)$ shows. Unlike for $k \le n^{1-\varepsilon}$, simply guessing at random until the secret code is determined is not sufficient. In fact, we show that an optimal non-adaptive strategy (deterministic or randomized) needs $\Theta(n \log n)$ guesses.
\end{abstract}

\textbf{Category:} F.2.2 [Analysis of Algorithms and Problem Complexity]: Nonnumerical Algorithms and Problems

\textbf{Keywords:} Combinatorial games, Mastermind, query complexity, randomized algorithms

\maketitle

\sloppy{
\section{Introduction}

Mastermind (see Section~\ref{sec:rules} for the rules) and other guessing games like liar games~\cite{liarsurvey,Sp94} have attracted the attention of computer scientists not only because of their playful nature, but more importantly because of their relation to fundamental complexity and information-theoretic questions. In fact, Mastermind with two colors was first analyzed \ifLongCites{in \cite{Erd63}}{by Erd\H os and R\'enyi~\cite{Erd63} in 1963}, several years before the release of Mastermind as a commercial boardgame.

Since then, intensive research by various scientific communities produced a plethora of results on various aspects of the Mastermind game (see also the literature review in Section~\ref{sec:review}). In a famous 1983 paper, 
\ifLongCites{\cite{Chvatal83}}{Chv\'atal~\cite{Chvatal83}} determined, precisely up to constant factors, the asymptotic number of queries needed on a board of size $n$ for all numbers $k$ of colors with $k \le n^{1-\eps}$, $\eps>0$ a constant.
Interestingly, a very simple guessing strategy suffices, namely asking random guesses until the answers uniquely determine the secret code. 

Surprisingly, for larger numbers of colors, no sharp bounds exist. In particular for the natural case of $n$ positions and $k = n$ colors, Chv\'atal's bounds $O(n \log n)$ and $\Omega(n)$ from 1983 are still the best known asymptotic results. 


We almost close this gap open for roughly 30 years and prove that Codebreaker can solve the $k = n$ game using only $O(n \log\log n)$ guesses. 
This bound, as Chv\'atal's, even holds for black-pegs only Mastermind.
When also white answer-pegs are used, we obtain a similar improvement from the previous-best $O(n \log n)$ bound to $O(n \log\log n)$ for all $n \le k \le n^2 \log\log n$. 


\subsection{Mastermind}\label{sec:rules}

Mastermind is a two-player board game invented in the seventies by the Israeli telecommunication expert Mordechai Meirowitz. 
The first player, called Codemaker here, privately chooses a color combination of four pegs. Each peg can be chosen from a set of six colors. 
The goal of the second player, Codebreaker, is to identify this \emph{secret code.} 
To do so, he guesses arbitrary length-4 color combinations. 
For each such guess he receives information of how close his guess is to Codemaker's secret code. Codebreaker's aim is to use as few guesses as possible.

\begin{figure}[t]
	\begin{center}
\includegraphics[scale=.37]{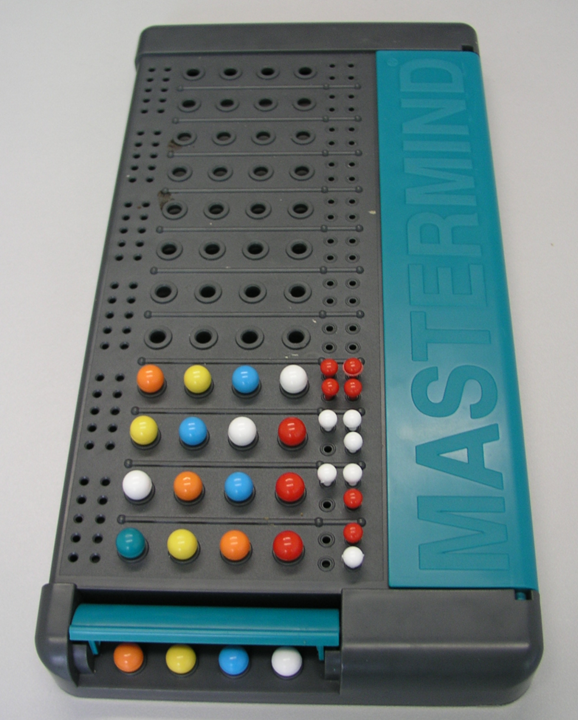}
	\end{center}
\caption{A typical round of Mastermind}
\label{fig}
\end{figure}

Besides the original 4-position 6-color Mastermind game, various versions  with other numbers of positions or colors are commercially available. The scientific community, naturally, often regards a generalized version with $n$ positions and $k$ colors (according to \ifLongCites{\cite{Chvatal83}}{Chv\'atal~\cite{Chvatal83}}, this was first suggested by Pierre Duchet). For a precise description of this game, let us denote by $[k]$ the set $\{1, \ldots, k \}$ of positive integers not exceeding $k$. At the start of the game, Codemaker chooses a secret code $z \in [k]^n$. In each round, Codebreaker guesses a string $x \in [k]^n$. Codemaker replies with the numbers 
$\eq(z,x):= |\{ i \in [n] \mid z_i=x_i\}|$
of positions in which her and Codebreaker's string coincide, and with 
$\pi(z,x)$, the number of additional pegs having the right color, but being in the wrong position. Formally, $\pi(z,x):=\max_{\rho \in S_n}|\{i \in [n] \mid z_{i}=x_{\rho(i)}\}|-\eq(z,x)$, where $S_n$ denotes the set of all permutations of the set $[n]$. In the original game, $\eq(z,x)$ is indicated by black answer-pegs, and $\pi(z,x)$ is indicated by white answer-pegs.
Based on this and all previous answers, Codebreaker may choose his next guess. He ``wins'' the game if his guess equals Codemaker's secret code. 

We should note that often, and partially also in this work, a black-pegs only variant is studied, in which Codemaker reveals $\eq(z,x)$ but not $\pi(z,x)$. This is justified both by several applications (see Section~\ref{sec:review}) and by the insight that, in particular for small numbers of colors, the white answer-pegs do not significantly improve Codebreaker's situation (see Section~\ref{sec:black-and-white}).

\subsection{Previous Results}\label{sec:previous}

Mastermind has been studied intensively in the mathematics and computer science literature. For the original 4-position 6-color version, \ifLongCites{\cite{Knuth77}}{Knuth~\cite{Knuth77}} has given a deterministic strategy that wins the game in at most five guesses. He also showed that no deterministic strategy has a 4-round guarantee.

The generalized $n$-position $k$-color version was investigated by \ifLongCites{\cite{Chvatal83}}{Chv\'atal~\cite{Chvatal83}}. He 
noted that a simple information-theoretic argument (attributed to Pierre Duchet) provides a lower bound of $\Omega(n \log k / \log n)$ for any $k=k(n)$.  

Extending the result\ifLongCites{~\cite{Erd63}}{~\cite{Erd63} of Erd\H os and R\'enyi} from $k=2$ to larger numbers of colors, he then showed that for any fixed $\eps > 0$, $n$ sufficiently large and $k \le n^{1-\eps}$, repeatedly asking random guesses until all but the secret code are excluded by the answers is an optimal Codebreaker strategy (up to constant factors). More specifically, using the probabilistic method and random guesses, he showed the existence of a \emph{deterministic non-adaptive} strategy for Codebreaker, that is,  a  set of $(2+\eps) n \frac{1+2 \log k}{\log (n/k)}$ guesses such that the answers uniquely determine any secret code Codemaker might have chosen (here and in the remainder, $\log n$ denotes the binary logarithm of $n$). These bounds hold even in the black-pegs only version of the game. 

For larger values of $k$, the situation is less understood. Note that the information-theoretic lower bound is $\Omega(n)$ for any number $k = n^\alpha$, $\alpha > 0$ a constant, of colors. 
For $k$ between $n$ and $n^2$, Chv\'atal presented a deterministic adaptive strategy using $2n \log k +4n$ guesses. For $k = n$, this strategy does not need white answer-pegs. 
Chv\'atal's result has been improved subsequently. 
\ifLongCites{\cite{Chen96}}{Chen, Cunha, and Homer~\cite{Chen96}} showed that for any $k \geq n$, 
$2n \lceil \log n \rceil+2n+\lceil k/n \rceil+2$ guesses suffice.
\ifLongCites{\cite{Goodrich09}}{Goodrich~\cite{Goodrich09}} proved an upper bound of
$n \lceil \log k \rceil +\lceil(2-1/k)n\rceil+k$ for the number of guesses needed to win the Mastermind game with an arbitrary number $k$ of colors and black answer-pegs only. This was again improved by \ifLongCites{\cite{JaegerP11}}{J\"ager and Peczarski~\cite{JaegerP11}}, who showed an upper bound of $n \lceil \log n \rceil - n + k + 1$ for the case $k > n$ and $n \lceil \log k \rceil + k$ for the case $k \le n$.
Note that for the case of $k = n$ colors and positions, all these results give the same asymptotic bound of $O(n\log n$).

\subsection{Our Contribution}

The results above show that Mastermind is well understood for $k \le n^{1-\eps}$, where we know the correct number of queries apart from constant factors. In addition, a simple non-adaptive guessing strategy suffices to find the secret code, namely casting random guesses until the code is determined by the answers.

On the other hand, for $k = n$ and larger, the situation is less clear. The best known upper bound, which is $O(n)$ (and tight) for $k = n^{\alpha}$, $0<\alpha<1$ a constant, suddenly increases to $O(n \log n)$ for $k = n$, while the information-theoretic lower bound remains at $\Omega(n)$. 

In this work, we prove that indeed there is a change of behavior around $k = n$. We show that, for $k = \Theta(n)$, the random guessing strategy, and, in fact, any other non-adaptive strategy, cannot find the secret code with an expected number of less than $\Theta(n \log n)$ guesses. This can be proven via an entropy compression argument as used by \ifLongCites{\cite{Moser09}}{Moser~\cite{Moser09}}, cf.\ Theorem~\ref{thm:nonadaptive}. For general $k$, our new lower bound for non-adaptive strategies is $\Omega(n \log(k) / \max\{\log(n/k), 1\})$. 
We also show that this lower bound is tight (up to constant factors). 
In fact, for $k\leq n$, $O(n \log(k) / \max\{\log(n/k), 1\})$ random guesses suffice to determine the secret code. That is, we extend Chv\'atal's result from $k\leq n^{1-\eps}$, $\eps > 0$ a constant, to all $k\leq n$. 


The main contribution of our work is a (necessarily adaptive) strategy that for $k=n$ finds the secret code with only $O(n \log\log n)$ queries.  
This reduces the $\Theta(\log n$) gap between the previous-best upper and the lower bound to $\Theta(\log\log n)$. 
Like the previous strategies for $k \le n$, our new one does not use white answer-pegs. Our strategy also improves the current best bounds for other values of $k$ in the vicinity of $n$; see Theorem~\ref{thm:main} below for the precise result.

The central part of our guessing strategy is setting up suitable \emph{coin-weighing problems}, solving them, and using the solution to rule out the possible occurrence of some colors at some positions. By a result of \ifLongCites{\cite{grebinski-journal}}{Grebinski and Kucherov~\cite{grebinski-journal}}, these coin weighing problems can be solved by relatively few independent random weighings. 

While our strategy thus is guided by probabilistic considerations, it can be derandomized to obtain a \emph{deterministic} $O(n\log\log n)$ strategy for black-peg Mastermind with $k=n$ colors. Moreover, appealing to an algorithmic result of \ifLongCites{\cite{Bshouty09}}{Bshouty~\cite{Bshouty09}} instead of Grebinski and Kucherov's result, we obtain a strategy that can be realized as a deterministic \emph{polynomial-time} codebreaking algorithm.

We also improve the current-best bounds for Mastermind with black and white answer-pegs, which stand at $O(n \log n)$ for $n\le k \le n^2 \log\log n$. For these $k$, we prove that $O(n \log\log n)$ guesses suffice. We point out that this improvement is not an immediate consequence of our $O(n \log\log n)$ bound for $k = n$ black-peg Mastermind. Reducing the number of colors from $k$ to $n$ is a non-trivial sub-problem as well. For example, when $k \ge n^{1 + \eps}$, Chv\'atal's strategy  for the game with black and white answer-pegs also uses $\Theta(n \log n)$ guesses to reduce the number of colors from $k$ to $n$, before employing a black-peg strategy to finally determine the secret code.

\subsection{Related Work}
\label{sec:review}

A number of results exist on the computational complexity of evaluating given guesses and answers. \ifLongCites{\cite{StuckmanZ06}}{Stuckman and Zhang~\cite{StuckmanZ06}} showed 
that it is NP-hard to decide whether there exists a secret code consistent with a given sequence of queries and black- and white-peg answers.
This result was extended to black-peg Mastermind by Goodrich~\cite{Goodrich09}.
More recently, \ifLongCites{\cite{Viglietta12}}{Viglietta~\cite{Viglietta12}} showed that both hardness results apply also to the setting with only $k=2$ colors. In addition, he proved that \emph{counting} the number of consistent secret codes is $\#$P-complete.

Another intensively studied question in the literature concerns the computation of (explicit) optimal winning strategies for small values of $n$ and $k$. 
As described above, the foundation for these works was laid by Knuth's famous paper~\cite{Knuth77} for the case with $n=4$ positions and $k=6$ colors. 
His strategy is worst-case optimal.
\ifLongCites{\cite{KoyamaL93}}{Koyama and Lai~\cite{KoyamaL93}} studied the average-case difficulty of Mastermind. They gave a strategy that solves Mastermind in an expected number of about 4.34 guesses if the secret string is sampled uniformly at random from all $6^4$ possibilities, and they showed that this strategy is optimal.
Today, a number of worst-case and average-case optimal winning strategies for different (small) values of $n$ and $k$ are known---both for the black- and white-peg version of the game~\cite{Goddard04,JaegerP09} and for the black-peg version~\cite{JaegerP11}. 
Non-adaptive strategies for specific values of $n$ and $k$ were studied in~\cite{Goddard03}.  

In the field of computational intelligence, Mastermind is used as a benchmark problem. For several heuristics, among them genetic and evolutionary algorithms, it has been studied how well they play Mastermind~\cite{KaliskerC03,KovacsT03,Berghman09,GuervosCM11,GuervosMC11}.

Trying to understand the intrinsic difficulty of a problem for such heuristics, \ifLongCites{\cite{DrosteJW06}}{Droste, Jansen, and Wegener~\cite{DrosteJW06}} suggested to use a query complexity variant (called black-box complexity). For the so-called onemax test-function class, an easy benchmark problem in the field of evolutionary computation, the black-box complexity problem is just the Mastermind problem for two colors. This inspired, among others, the result~\cite{DoerrW12} showing that a memory-restricted version of Mastermind (using only two rows of the board) can still be solved in $O(n / \log n)$ guesses when the number of colors is constant.
%


Several privacy problems have been modeled via the Mastermind game. \ifLongCites{\cite{Goodrich092}}{Goodrich~\cite{Goodrich092}} used black-peg Mastermind to study the extent of private genomic data leaked by comparing DNA-sequences (even when using protocols only revealing the degree of similarity). \ifLongCites{\cite{FocardiL10}}{Focardi and Luccio~\cite{FocardiL10}} showed that certain API-level attacks on user PIN data can be seen as an extended Mastermind game.

\subsection{Organization of this paper}

We describe and analyze our $O(n\log\log n)$ strategy for $k=n$ colors in Section~\ref{sec:nloglogn}. In Section~\ref{sec:black-and-white} we present a strong connection between the black-pegs only and the classic (black and white pegs) version of Mastermind. This yields, in particular, the claimed bound of $O(n\log\log n)$ for the classic version with $n\leq k \leq n^2 \log\log n$ colors. In Section~\ref{sec:nonadaptive} we analyze non-adaptive strategies. We prove a lower bound via entropy compression and show that it is tight for $k \le n$ by extending Chavatal's analysis of random guessing to all $k \le n$.

\section{The $O(\lowercase{n \log\log n})$ Adaptive Strategy}
\label{sec:nloglogn}


In this section we present the main contribution of this work, a black-pegs only strategy that solves Mastermind with $k=n$ colors in $O(n\log\log n)$ queries. We state our results for an arbitrary number $k=k(n)$ of colors; they improve upon the previously known bounds for all $k=o(n\log n)$ with $k\geq n^{1-\eps}$ for every fixed $\eps>0$.

\begin{theorem}
\label{thm:main} For Mastermind with $n$ positions and $k=k(n)$ colors, the following holds.
\begin{itemize}
\item If $k= \Omega(n)$ then there exists a randomized winning strategy that uses black pegs only and needs an expected number of $O(n \log \log n + k)$ guesses. 
\item If $k= o(n)$ then there exists a randomized winning strategy that uses black pegs only and needs an expected number of $O\left(n \log \left(\frac{\log n}{\log (n/k)}\right)\right)$ guesses. 
\end{itemize}
\end{theorem}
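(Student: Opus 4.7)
The plan is to design an adaptive multi-phase Codebreaker strategy using only black answer-pegs. The skeleton is: (a) if $k \geq n$, first spend $k$ queries on a \emph{color-identification} step that reveals the multiplicity $m_c$ of every color by guessing the monochromatic string $(c,c,\ldots,c)$ for each $c$, accounting for the additive $+k$ term; (b) iteratively refine, over $O(\log\log n)$ adaptive phases, a candidate color set $C_i \subseteq [k]$ at each position $i$; (c) read off the secret once every $|C_i|$ is constant.

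The engine of each refinement phase is a reduction to \emph{coin-weighing}. Guessing a string $x$ with color $c$ on a subset $S \subseteq [n]$ and a fixed filler color $c' \ne c$ elsewhere yields black-peg answer $|\{i \in S : z_i = c\}| + |\{i \notin S : z_i = c'\}|$. Holding $c'$ fixed and varying $S$ converts a batch of such Mastermind queries into weighings of the indicator vector $v_c \in \{0,1\}^n$ defined by $(v_c)_i = \mathbf{1}[z_i = c]$. By the theorem of \ifLongCites{\cite{grebinski-journal}}{Grebinski and Kucherov~\cite{grebinski-journal}}, $O(p/\log p)$ random weighings suffice with high probability to recover $v_c$ restricted to any candidate set of $p$ positions.

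Budgeting to $O(n\log\log n)$ requires iterating rather than a one-shot application: running the coin-weighing routine independently for each of $k$ colors on all $n$ positions would cost $\Theta(nk/\log n)$ queries, far too many. Instead, maintain the sets $C_i^{(t)}$ together with their companion sets $P_c^{(t)} = \{i : c \in C_i^{(t)}\}$ and arrange the phases so that the maximum candidate-set size shrinks at a square-root rate, $|C_i^{(t+1)}| = O(\sqrt{|C_i^{(t)}|})$. With a careful schedule of which coin-weighings to execute in phase $t$, each phase can be made to cost $O(n)$ queries, and after $T = O(\log\log n)$ phases all candidate sets have constant size. In the $k = o(n)$ regime the color-identification step is unnecessary and the recursion begins at candidate size $k$; then only about $\log(\log n / \log(n/k))$ phases are needed to bring the sets to constant size, matching the sharper bound stated for that range.

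The main obstacle is threefold: (i) designing the randomized coin-weighings in each phase so that the induced weighing instance is decodable within the phase's query budget, despite the filler slots consuming some information; (ii) proving the square-root-type shrinkage of the $|C_i^{(t)}|$ with high probability, which plausibly hinges on the typical color-multiplicity profile of a random (or adversarial) secret; and (iii) controlling the cumulative failure probability across $O(\log\log n)$ randomized subroutines and all colors so that the \emph{expected} total is indeed $O(n\log\log n + k)$. The derandomization and polynomial-time variant then follow by replacing Grebinski--Kucherov with Bshouty's deterministic coin-weighing subroutine~\cite{Bshouty09}.
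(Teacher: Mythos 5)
There is a genuine gap: the engine that makes each phase work is exactly the part you leave open, and the specific plan you sketch in its place does not fit the budget. The paper's key idea, which is absent from your proposal, is to query a \emph{random} string $r\in C_1\times\cdots\times C_n$ and to treat as ``coins'' not the positions of a fixed color but \emph{blocks} of $\Theta(k')$ consecutive positions (where $k'$ is the current candidate-set size), chosen so that each block contributes $0$ to $\eq(z,r)$ with constant probability. Coin-weighing (with dummy colors as filler, as you correctly anticipate) then locates the $0$-blocks with $\Theta(m/\log m)$ queries for $m=\Theta(n/k')$ blocks, and every position in a $0$-block loses one candidate color, namely $r_i$. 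This eliminates one color at a constant fraction of positions per round, so halving all the $|C_i|$ takes $\Theta(k')$ rounds and $\Theta(n/\log(4n/k'))$ queries; summing the halving phases gives $O(n\log\log n)$. Your per-color weighing of the indicator vectors $v_c$ recovers far more information than needed and, as you note yourself, costs $\Theta(nk/\log n)$ if done for all colors; the ``careful schedule'' that is supposed to rescue this is precisely the missing idea, and you explicitly list the shrinkage proof as an unresolved obstacle.

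Moreover, the square-root schedule you propose as the replacement is unsupported and internally inconsistent with the theorem's second bullet. Eliminating candidates via $0$-answers removes at most one color per position per query, so going from candidate size $n$ down to $\sqrt n$ already forces $\Omega(n)$ elimination rounds, and in the paper's accounting that first ``super-phase'' alone costs $\Theta(n\log\log n)$ (the sum $\sum_j n/(j+O(1))$ is dominated by small $j$); there is no justification for doing it in $O(n)$ queries. Worse, under square-root shrinkage starting from size $k$ the number of phases to reach constant size is $\Theta(\log\log k)$, not $\log\bigl(\log n/\log(n/k)\bigr)$; for example with $k=\sqrt n$ the theorem claims $O(n)$ total, which your uniform-$O(n)$-per-phase model cannot produce. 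The paper gets the $k=o(n)$ bound because the per-phase cost $O\bigl(n/(\log(n/k)+j)\bigr)$ \emph{decreases} with $j$ and the sum telescopes to $O\bigl(n\log(\log n/\log(n/k))\bigr)$. Your peripheral ingredients (the $+k$ term from monochromatic queries, dummy/filler colors, Grebinski--Kucherov, the constant-size wrap-up, derandomization via Bshouty) all match the paper, but the central mechanism and the phase accounting need to be replaced.
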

The $O$-notation in Theorem~\ref{thm:main} only hides absolute constants. Note that, setting $k=:n^{1-\delta}$, $\delta=\delta(n)$, the bound for $k=o(n)$ translates to $O(n\log(\delta^{-1}))$. 

We describe our strategy and prove Theorem~\ref{thm:main} in Sections~\ref{sec:ideas}-\ref{sec:analysis}. We discuss the derandomization of our strategy in Section~\ref{sec:derandomization}.


\subsection{Main Ideas} 
\label{sec:ideas}
Our goal in this section is to give an informal sketch of our main ideas, and to outline how the $O(n\log\log n)$ bound for $k=n$ arises. For the sake of clarity, we nevertheless present our ideas in the general setting---it will be useful to distinguish between $k$ and $n$ notationally. As justified in Section~\ref{sec:preliminaries} below, we assume that $k\leq n$ and that both $k$ and $n$ are powers of two.

A simple but crucial observation is that when we query a string $x \in [k]^n$ and the answer $\eq(z,x)$  is 0 (recall that $z$ denotes Codemaker's secret color code), then we know that all queried colors are wrong for their respective positions; i.e., we have $z_i\neq x_i$ for all $i \in [n]$. To make use of this observation, we maintain, for each position $i$, a set $C_i\subseteq[k]$ of colors that we still consider possible at position $i$. Throughout our strategy we reduce these sets successively, and once $|C_i|=1$ for all $i\in[n]$ we have identified the secret code~$z$. Variants of this idea have been used by several previous authors~\cite{Chvatal83,Goodrich09}.
 
Our strategy proceeds in phases. In each phase we reduce the size of all sets $C_i$ by a factor of two. Thus, before the $j$th phase we will have $|C_i|\leq k/2^{j-1}$ for all $i \in [n]$. 
Consider now the beginning of the $j$th phase, and assume that all sets $C_i$ have size exactly $k':=k/2^{j-1}$.  Imagine we query a random string $r$ sampled uniformly from $C_1\times \dots \times C_n$. The expected value of $\eq(z,r)$ is $n/k'$, and the probability that $\eq(z,r)=0$ is $(1-1/k')^n\leq e^{-n/k'}$. If $k'$ is significantly smaller than $n$, this probability is very small, and we will not see enough $0$-answers to exploit the simple observation we made above. However, if we group the $n$ positions into $m:=4n/k'$ blocks of equal size $k'/4$, the expected contribution of each such block is $1/4$, and the probability that a fixed such block contributes $0$ to $\eq(z,r)$ is $(1-1/k')^{k'/4}\approx e^{-1/4}$, i.e., constant. We will refer to blocks that contribute $0$ to $\eq(z,r)$ as \emph{$0$-blocks} in the following. For a random query we expect a constant fraction of all $m$ blocks to be $0$-blocks. If we can identify \emph{which} blocks these are, we can rule out a color at each position of each such block and make progress towards our goal. 

As it turns out, the identification of the $0$-blocks can be reduced to a \emph{coin-weighing problem} that has been studied by several authors; see~\cite{grebinski-journal,Bshouty09} and references therein. Specifically, we are given $m$ coins of unknown integer weights and a spring scale. We can use the spring scale to determine the total weight of an arbitrary subset of coins in one weighing. Our goal is to identify the weight of every coin with as few weighings as possible.

In our setup, the `coins' are the blocks we introduced above, and the `weight' of each block is its contribution to $\eq(z,r)$. To simulate weighings of subsets of coins by Mastermind queries, we use `dummy colors' for some positions, i.e., colors that we already know to be wrong at these positions. Using these, we can simulate the weighing of a subset of coins (=blocks) by copying the entries of the random query $r$ in blocks that correspond to coins we wish to include in our subset, and by using dummy colors for the entries of all other blocks. 

Note that the \emph{total weight} of our `coins' is $\eq(z,r)$. Typically this value will be close to its expectation $n/k'$, and therefore of the same order of magnitude as  the number of blocks $m$. It follows from a coin-weighing result by \ifLongCites{\cite{grebinski-journal}}{Grebinski and  Kucherov~\cite{grebinski-journal}} that $O(m/\log m)$ random queries (of the described block form, simulating the weighing of a random subset of coins) suffice to determine the contribution of each block to $\eq(z,r)$ with some positive probability. As observed before, typically a constant fraction of all blocks contribute $0$ to $\eq(z,r)$, and therefore we may exclude a color at a constant fraction of all $n$ positions at this point.

Repeating this procedure of querying a random string $r$ and using additional `random coin-weighing queries' to identify the $0$-blocks eventually reduces the sizes of the sets $C_i$ below $k'/2$, at which point the phase ends. In total this requires $\Theta(k')$ rounds in which everything works out as sketched, corresponding to a total number of $\Theta(k'\cdot (m/\log m)) = \Theta(n / \log (4n/k'))$ queries for the entire phase.

Summing over all phases, this suggests that for $k=n$ a total number of
$$\sum_{j=1}^{\log k} O\left(\frac{n}{\log (4n/\left(\frac{k}{2^{j-1}} \right))}\right)  \stackrel{k = n}= O(n) \sum_{j=1}^{\log n} \frac{1}{j+1} = O(n \log \log n)
$$
queries suffice to determine the secret code $z$, as claimed in Theorem~\ref{thm:main} for $k=n$.

We remark that our precise strategy, Algorithm~\ref{algo:mastermind}, slightly deviates from this description. This is due to a technical issue with our argument once the number $k'$ of remaining colors drops below $C\log n$ for some $C>0$. Specifically, beyond this point the error bound we derive for a fixed position is not strong enough to beat a union bound over all $n$ positions. To avoid this issue, we stop our color reduction scheme before $k'$ becomes that small (for simplicity as soon as $k'$ is less than $\sqrt{n}$), and solve the remaining Mastermind problem by asking random queries from the remaining set $C_1\times \dots \times C_n$, as originally proposed by \ifLongCites{\cite{Erd63}}{Erd\H os and R\'enyi~\cite{Erd63}} and \ifLongCites{\cite{Chvatal83}}{Chv\'atal~\cite{Chvatal83}}. 

\subsection{Precise Description of Codebreaker's Strategy} 
\label{sec:strategy-description}
 
\subsubsection{Assumptions on $n$ and $k$, Dummy Colors}
\label{sec:preliminaries}
Let us now give a precise description of our strategy. We begin by determining a dummy color for each position, i.e., a color that we know to be wrong at that particular position. For this we simply query  the $n+1$ many strings $(1,1, \dots, 1), (2,1,1, \dots, 1), \dots, (2, 2, \dots, 2)\in [k]^n$. Processing the answers to these queries in order, it is not hard to determine the location of all $1$'s and $2$'s in Codemaker's secret string $z$. In particular, this provides us with a dummy color for each position.


Next we argue that for the main part of our argument we may assume that $n$ and $k$ are powers of two. To see this for $n$, note that we can simply extend Codemaker's secret string in an arbitrary way such that its length is the smallest power of two larger than $n$, and pretend we are trying to determine this extended string. To get the answers to our queries in this extended setting, we just need to add the contribution of the self-made extension part (which we determine ourselves) to the answers Codemaker provides for the original string. As the extension changes $n$ at most by a factor of two, our claimed asymptotic bounds are unaffected by this.

To argue that we may also assume $k$ to be a power of two, we make use of the dummy colors we already determined for the original value of $k$.  Similar to the previous argument, we increase $k$ to the next power of two and consider the game with this larger number of colors. To get the answers to our queries in this extended setting from Codemaker (who still is in the original setting), it suffices to replace every occurrence of a color that is not in the original color set with the dummy color at the respective position.

We may and will also assume that $k\leq n$. If $k>n$ we can trivially reduce the number of colors to $n$ by making $k$ monochromatic queries. With this observation the first part of Theorem~\ref{thm:main} follows immediately from the $O(n\log\log n)$ bound we  prove for the case $k=n$. 

\subsubsection{Eliminating Colors with Coin-Weighing Queries}

With these technicalities out of the way, we can focus on the main part of our strategy. 
As sketched above, our strategy operates in \emph{phases}, where in the $j$th phase we reduce the sizes of the sets $C_i$ from $k/2^{j-1}$ to  $k/2^j$. 
For technical reasons, we do not allow the sizes of $C_i$ to drop below $k/2^j$ during phase $j$; i.e., once we have $|C_i|=k/2^j$ for some position $i\in[n]$, we no longer remove colors from $C_i$ at that position and ignore any information that would allow us to do so.

Each phase is divided into a large number of \emph{rounds}, where a round consists of querying a random string $r$ and subsequently identifying the $0$-blocks (blocks that contribute $0$ to $\eq(z,r)$) by the coin-weighing argument outlined above.

To simplify the analysis, the random string $r$ is sampled from the same distribution throughout the entire phase. Specifically, at the beginning of phase $j$ we define the set $\cR_j:=C_1\times \dots \times C_n$, and sample the random string $r$ uniformly at random from $\cR_j$ in each round of phase $j$. Note that we do not adjust $\cR_j$ during phase $j$; information about excluded colors we gain during phase $j$ will only be used in the definition of the set $\cR_{j+1}$ in phase $j+1$. 

We now introduce the formal setup for the coin-weighing argument. As before we let $k'~:=~k/2^{j-1}$ and partition the $n$ positions into $m := 4 n / k'$ blocks of size $k'/4$. 
More formally, for every $s \in [m]$ we let $B_s := \{ (s-1) k'/4 +1, \dots, s k'/4 \}$ denote the indices of block $s$, and we denote by $v_s: = | \{ i \in B_s : z_i = r_i \} |$ the contribution of block $B_s$ to $\eq(z,r)$. (Note that $\sum_{s \in [m]} v_s = \eq(z,r)$.) As indicated above we wish to identify the 0-blocks, that is, the indices $s \in [m]$ for which $v_s = 0$.

For $y \in \{0,1\}^m$, define $r_y$ as the query that is identical to $r$ on the blocks $B_s$ for which $y_s = 1$, and identical to the string of dummy colors on all other blocks. Thus $\eq(z,r_y) = \sum_{s \in [m], y_s = 1} v_s$. With this observation, identifying the values $v_s$ from a set of queries of form $r_y$ is equivalent to a coin-weighing problem in which we have $m$ coins with positive integer weights that sum up to $\eq(z,r)$: Querying $r_y$ in the Mastermind game provides exactly the information we obtain from weighing the set of coins indicated by $y$.

We will only bother with the coin-weighing if the initial random query of the round satisfies $\eq(z,r)\leq m/2$. (Recall that the expected value of $\eq(z,r)$ is $m/4$.) If this is the case, we query an appropriate number $f(m)$ of strings of form $r_y$, with $y\in\{0,1\}^m$ sampled uniformly at random (u.a.r.) and independently. The function $f(m)$ is implicit in the proof of the coin-weighing result of~\cite{grebinski-journal}; it is in $\Theta(m/\log m)$ and guarantees that the coin-weighing succeeds with probability at least $1/2$. Thus with probability at least $1/2$, these queries determine all values $v_s$ and, in particular, identify all $0$-blocks. Note that the inequality $\eq(z,r)\leq m/2$ also guarantees that at least half of the $m$ blocks are $0$-blocks.

We say that a round is \emph{successful} if $\eq(z,r) \leq m/2$ and if the coin-weighing successfully identifies all $0$-blocks. In each successful round, we update the sets $C_i$ as outlined above; i.e., for each position $i$ that is in a $0$-block and for which $|C_i|>k'/2$ we set $C_i:=C_i\setminus\{r_i\}$. Note that it might happen that $r_i$ is a color that was already removed from $C_i$ in an earlier round of the current phase, in which case $C_i$ remains unchanged.
If a round is unsuccessful we do nothing and continue with the next round.

This completes the description of our strategy for a given phase. We abandon this color reduction scheme once $k'$ is less than $\sqrt{n}$. At this point, we simply ask queries sampled uniformly and independently at random from the current set $\cR=C_1 \times \dots \times C_n$. We do so until the answers uniquely determine the secret code $z$. 
It follows from \ifLongCites{\cite{Chvatal83}}{Chv\'atal's result~\cite{Chvatal83}} that the expected number of queries needed for this is $O(n \log k' /  \log (n/k'))= O(n)$.

This concludes the description of our strategy. It is summarized in Algorithm~\ref{algo:mastermind}. Correctness is immediate from our discussion, and it remains to bound the expected number of queries the strategy makes.

\begin{algorithm2e} 
\label{algo:mastermind}
Determine a dummy color for each position\label{line:dummy}\;
\lForEach{$i\in[n]$}{$C_i\assign[k]$\;} 
$j \assign 0$ and $k'\assign k$\;
\While{$k' > \sqrt{n}$}{
  $j \assign j+1$, 
  $k' \assign k/2^{j-1}$, 
  $\cR_j \assign C_1\times \dots \times C_n$,
   and $m \assign 4n/k'$\;
   \Repeat{$\forall i\in [n]: |C_i|= k'/2$}{
      Select a string $r$ u.a.r.\ from $\cR_j$ and query $\eq(z,r)$\label{line:sample}\;
      \If{$\eq(z,r)\leq m/2$ \label{Bedingung}}{
        \For{$i=1, \ldots, f(m)$ \quad /* $f(m) = \Theta(m/\log m)$ /* \label{line:coins1}}
           {Sample $y$ u.a.r.\ from $\{0,1\}^m$  and query $\eq(z,r_y)$\label{line:coins2}\;}
          \If{\normalfont{these $f(m)$ queries determine the $0$-blocks of $r$}}{
              \ForEach{$i\in[n]$}
                 {\lIf{\normalfont{$i$ is in a $0$-block and $|C_i|> k'/2$}}
                    {$C_i \assign C_i\setminus\{r_i\}$\;}  }
             }
      }  
    }
}
$\cR \assign C_1\times \dots \times C_n$\;
Select strings $r$ independently and u.a.r.\ from $\cR$ and query $\eq(z,r)$ until $z$ is determined\;
\caption{Playing Mastermind with many colors}
\end{algorithm2e}

\subsection{Proof of Theorem~\ref{thm:main}}
\label{sec:analysis}

We begin by bounding the expected number of rounds in the $j$th phase.

\begin{claim} \label{claim:one-phase}
 The expected number of rounds required to complete phase $j$ is $O(k')=O(k/2^j)$.
\end{claim}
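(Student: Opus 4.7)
The plan is to analyze, for each position $i \in [n]$ separately, the random variable $N_i$ defined as the number of rounds in phase $j$ until $|C_i|$ first drops to $k'/2$. Since phase $j$ ends precisely when this has happened at every position, the number of rounds needed to complete the phase equals $\max_{i \in [n]} N_i$, and the goal is to establish $\mathbb{E}[\max_i N_i] = O(k')$.

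The core technical step is to show that there is an absolute constant $c > 0$ such that, conditional on the history $\mathcal{F}_{t-1}$ before round $t$ and on $|C_i^{(t)}| > k'/2$, an elimination at position $i$ (i.e.\ an actual decrease of $|C_i|$) occurs in round $t$ with probability at least $c$. To prove this, decompose the elimination event into four sub-events: letting $B_{s_i}$ denote the block containing $i$, these are (E1) $B_{s_i}$ is a $0$-block, (E2) $\eq(z,r) \leq m/2$, (E3) the $f(m)$ coin-weighing queries correctly identify all $0$-blocks, and (E4) $r_i \in C_i^{(t)}$. The key fact is that the coordinates of $r \in \cR_j$ are independent. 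Condition on a fixed value $r_i = c_0 \in C_i^{(t)} \setminus \{z_i\}$: event E1 then reduces to the independent events $z_j \neq r_j$ for the remaining $k'/4 - 1$ coordinates of $B_{s_i}$ and has probability $(1 - 1/k')^{k'/4 - 1} = \Omega(1)$. Conditioned on E1 and on this choice of $r_i$, linearity of expectation gives $\mathbb{E}[\eq(z,r)] < m/4$, so Markov's inequality yields $\Pr[E2] \geq 1/2$. Since the strings $y$ are independent of $r$, the coin-weighing guarantee of Grebinski and Kucherov gives $\Pr[E3 \mid r,\, \eq(z,r) \leq m/2] \geq 1/2$. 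Finally, at least $|C_i^{(t)}| - 1 \geq k'/2$ of the $k'$ possible values of $r_i$ satisfy E4 and differ from $z_i$, contributing another factor $\geq 1/2$ upon averaging over $r_i$. Multiplying the four bounds yields the uniform lower bound $c$.

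With this bound, $N_i$ is stochastically dominated by the time required for an i.i.d.\ Bernoulli$(c)$ sequence to accumulate $k'/2$ successes, i.e.\ by a negative binomial random variable with mean $k'/(2c) = O(k')$ and exponentially decaying upper tail. A Chernoff bound gives $\Pr[N_i > C k'] \leq \exp(-\Omega(k'))$ for a sufficiently large constant $C$. Since the procedure only runs phases while $k' \geq \sqrt n$, a union bound over $i \in [n]$ yields $\Pr[\max_i N_i > C k'] \leq n \exp(-\Omega(\sqrt n)) = o(1)$, and summing the tail gives $\mathbb{E}[\max_i N_i] \leq C k' + \sum_{t > Ck'} \Pr[\max_i N_i > t] = C k' + O(1) = O(k')$, as claimed.

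The main obstacle I anticipate is the careful handling of conditional probabilities in the decomposition step: the four events E1--E4 all depend on the same random string $r$, and one must exploit the independence of its coordinates cleanly so that the resulting constant $c$ does not depend on the (complicated) current state $(C_1^{(t)}, \ldots, C_n^{(t)})$ of Codebreaker's candidate sets.
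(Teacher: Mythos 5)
Your proposal is correct and follows essentially the same route as the paper's proof: a constant per-round probability of eliminating a color at a fixed position (obtained by multiplying the probabilities of the $0$-block event, the $\eq(z,r)\leq m/2$ event, the coin-weighing success, and $r_i$ still being a candidate), followed by a Chernoff bound, a union bound over the $n$ positions using $k'\geq\sqrt{n}$, and summing the tail. The only cosmetic difference is that you bound the $0$-block probability directly as $(1-1/k')^{k'/4-1}$ via coordinate independence, whereas the paper conditions on a successful round and argues by symmetry that a fixed block is one of the at least $m/2$ zero-blocks with probability at least $1/2$; both yield the required constant.
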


\begin{proof}
We first show that a round is successful with probability at least $1/4$. Recall that $\eq(z,r)$ has an expected value of $n/k'=m/4$. 
Thus, by Markov's inequality, we have $\eq(z,r)\leq m/2$ with probability at least $1/2$. Moreover, as already mentioned, the proof of the coin-weighing result by 
\ifLongCites{\cite{grebinski-journal}}{Grebinski and Kucherov~\cite{grebinski-journal}} 
implies that our $f(m)=\Theta(m/\log m)$ random coin-weighing queries identify all $0$-blocks with probability at least $1/2$. 
Thus, in total the probability for a successful round is at least $1/2\cdot 1/2=1/4$.

We continue by showing that the probability that a successful round decreases the number of available colors for a fixed position, say position 1, is at least $1/4$. Note that this happens if $r \in \cR_j$ satisfies the following two conditions:
\begin{itemize}
\item[(i)] $v_1=0$, i.e., block $B_1$ is a 0-block with respect to $r$, and
\item[(ii)] $r_1\in C_1$, i.e., the color $r_1$ has not been excluded from $C_1$ in a previous round of phase $j$.
\end{itemize}
 For (i) recall that in a successful round at least $m/2$ of the $m$ blocks are $0$-blocks. It follows by symmetry that $B_1$ is a 0-block with probability at least $1/2$. Moreover, conditional on (i), $r_1$ is sampled uniformly at random from the $k'-1$ colors that are different from $z_1$ and were in $C_1$ at the beginning of the round. Thus the probability that $r_1$ is in the current set $C_1$ is $|C_1|/(k'-1)$, which is at least $1/2$ because we do not allow $|C_1|$ to drop below $k'/2$. We conclude that, conditional on a successful round, the random query $r$ decreases $|C_1|$ by one with probability at least $1/2\cdot 1/2=1/4$.

Thus, in total, the probability that a round decreases $|C_1|$ by one is at least $1/4\cdot 1/4=1/16$ throughout our strategy. It follows that the probability that after $t$ successful rounds in phase $j$ we 
still have $|C_1|>k'/2$ is bounded by the probability that in $t$ independent Bernoulli trials with  success probability $1/16$ we observe fewer than $k'/2$ successes. If $t/16\geq k'$, by Chernoff bounds this probability is bounded by $e^{-ct}$ for some absolute constant $c>0$. 

Let us now denote the number of rounds phase $j$ takes by the random variable $T$. By a union bound, the probability that $T\geq t$, i.e., that after $t$ steps at one of the positions $i\in[n]$ we still have $|C_i|>k'/2$, is bounded by $ne^{-ct}$ for $t\geq 16k'$. It follows that \begin{equation}
\mathbb E[T] =  \sum_{t \ge 1} \Pr[ T \ge t ] 
\le  16k' + n  \sum_{t > 16k'} e^{-ct} =  16k' + n  e^{-\Omega(k')}= O(k'),
\end{equation}
where the last step is due to $k'\geq \sqrt{n}=\omega(\log n)$.
\end{proof}

With Claim~\ref{claim:one-phase} in hand, we can bound the total number of queries  required throughout our strategy by a straightforward calculation.

\begin{proof}[Proof of Theorem~\ref{thm:main}]
Recall that for each phase $j$ we have $m=\Theta(n/k')=\Theta(n/(k/2^{j-1}))$ and that $f(m)=\Theta(m/\log m)$. Thus by Claim~\ref{claim:one-phase}, the expected number of queries our strategy asks in phase $j$ is bounded by $$O(k')\cdot (1 + f(m)) = O\left(\frac{n}{\log(\frac{n}{k/2^j})}\right) = O\left( \frac{n}{\log(n/k) +j}\right).$$

It follows that throughout the main part of our strategy we ask an expected number of queries of at most 
$$O(n) \sum_{j=1}^{\log k} \frac{1}{\log(n/k) +j} = O\big(n \big(\log \log n - \log\log (n/k)\big)\big)= O\left(n \log \left(\frac{\log n}{\log (n/k)}\right)\right).
$$
(This calculation is for $k<n$; as observed before, for $k=n$ a very similar calculation yields a bound of $O(n\log\log n)$.)
As the number of queries for determining the dummy colors and for wrapping up at the end  is only $O(n)$, Theorem~\ref{thm:main} follows.
\end{proof}

\subsection{Derandomization}
\label{sec:derandomization}

The strategy we presented in the previous section can be derandomized and implemented as a polynomial-time algorithm.

\begin{theorem}
\label{thm:derandupper}
The bounds stated in Theorem~\ref{thm:main} can be achieved by a deterministic winning strategy. 
Furthermore, this winning strategy can be realized in polynomial time.
\end{theorem}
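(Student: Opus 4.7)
The plan is to remove randomness from Algorithm~\ref{algo:mastermind} in each of the three places where it is used, and to ensure a polynomial-time implementation throughout. The random choices are: the coin-weighing queries $r_y$ in lines~9--10 (choice of $y$), the initial query $r$ of each round in line~7, and the random queries in the final phase, once $k'<\sqrt{n}$. I will leverage Bshouty's deterministic coin-weighing algorithm for the first, the method of conditional probabilities for the second, and a derandomization of Chv\'atal's random-guessing argument for the third.

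For the coin-weighing step, Bshouty~\cite{Bshouty09} provides an explicit, polynomial-time-constructible family of $L=\Theta(m/\log m)$ subsets $S_1,\ldots,S_L\subseteq[m]$ together with a polynomial-time decoder that recovers the individual weights $v_1,\ldots,v_m$ from the sums $\sum_{s\in S_\ell}v_s$. Substituting this family for the random $y\in\{0,1\}^m$ in lines~9--10 makes the identification of $0$-blocks succeed deterministically in each round while preserving the $\Theta(m/\log m)$ query count, and makes this step polynomial-time. For the initial query $r$, I apply the method of conditional expectations: the key quantity is an upper bound on the probability, over the still-random coordinates of $r$, that the round fails to decrease $|C_i|$ at a positive constant fraction of positions. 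By the analysis of Claim~\ref{claim:one-phase} this probability is bounded away from $1$ for a uniform random $r\in\cR_j$. A careful rewriting of the Markov/Chernoff bounds used in that proof yields a pessimistic estimator that depends only on the sets $C_i$ and the partial assignment already fixed---crucially, not on the unknown secret~$z$---so that $r$ can be built coordinate-by-coordinate in polynomial time while keeping the estimator bounded away from $1$, preserving the per-phase bound of $O(k')$ successful rounds.

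For the final phase, I derandomize Chv\'atal's random-guessing analysis via conditional probabilities applied to the expected number of pairs $(z,z')\in\cR\times\cR$ that remain indistinguishable by the chosen queries, producing in polynomial time an explicit non-adaptive set of $O(n)$ queries that separates every such pair. The main obstacle will be polynomial-time \emph{decoding}: since identifying a consistent secret from arbitrary Mastermind answers is NP-hard~\cite{StuckmanZ06,Goodrich09}, the final-phase queries cannot be chosen arbitrarily. I will circumvent this by tailoring the construction so that decoding reduces to a further instance of coin-weighing over the $|C_i|\leq\sqrt{n}$ candidates remaining at each position, again solvable in polynomial time via Bshouty's result. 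Combining Bshouty's coin-weighing, the conditionally derandomized choice of $r$, and this structured final-phase construction then yields the deterministic, polynomial-time strategy asserted in Theorem~\ref{thm:derandupper}.
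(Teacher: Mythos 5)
Your first component---replacing the random coin-weighing queries by a fixed set of $\Theta(m/\log m)$ queries (existence via Grebinski--Kucherov~\cite{grebinski-journal}, polynomial time via Bshouty~\cite{Bshouty09})---matches the paper. The other two components have genuine gaps. For the choice of $r$, the method of conditional expectations requires a pessimistic estimator that Codebreaker can actually evaluate, but the quantities governing success of a round---whether $\eq(z,r)\le m/2$ and whether a given block is a $0$-block---depend on the unknown secret $z$. The unconditional expectation $\mathbb{E}[\eq(z,r)]=\sum_{i}1/|C_i|$ is indeed independent of $z$, but once a prefix $r_1,\dots,r_t$ is fixed the conditional expectation becomes $|\{i\le t:r_i=z_i\}|+\sum_{i>t}1/|C_i|$, which Codebreaker cannot compute without extra queries; your assertion that a $z$-independent estimator exists is precisely the missing idea, and it is not clear how to supply one. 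The paper avoids conditional probabilities entirely: in each phase it asks $k'$ queries such that every color of $C_i$ appears at position $i$ in exactly one of them. Since the scores of these queries sum to exactly $n$, at least $k'/2$ of them have score at most $m/2$, and a fixed block of size $k'/4$ can fail to be a $0$-block in at most $k'/4$ of those, hence is a $0$-block in at least $k'/4$ of them. This deterministic averaging argument removes at least $k'/4$ colors at every position in each phase, at the same $O(k'\cdot m/\log m)$ query cost.

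Your endgame is also off track. Applying conditional probabilities to the number of unseparated pairs $(z,z')\in\cR\times\cR$ involves up to $|\cR|^2$ terms, which is exponential in $n$; indeed the paper explicitly remarks that its probabilistic-method construction of a non-adaptive strategy is non-constructive and that an explicit polynomial-time version is open. Your proposed remedy for the NP-hardness of decoding (reducing decoding to a further coin-weighing instance over the surviving candidates) is not substantiated and would require a new argument. The paper sidesteps all of this: because the averaging argument above needs no union bound over positions (unlike the randomized analysis in Claim~\ref{claim:one-phase}, which requires $k'\ge\sqrt n$), it simply continues the color-reduction scheme down to constant $k'$ and then determines the remaining constantly many candidate colors per position with $O(n)$ further queries by reusing the elementary color-location technique from Section~\ref{sec:preliminaries}.
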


\begin{proof}
The main loop of the algorithm described above uses randomization in two places: for generating the random string $r$ of each round (line~\ref{line:sample} in Algorithm 1), and for generating the $f(m)$ many random coin-weighing queries $r_y$ used to identify the $0$-blocks of $r$ if $\eq(z,r)\leq m/2$ (line~\ref{line:coins2}).

The derandomization of the coin-weighing algorithm is already given in the work of \ifLongCites{\cite{grebinski-journal}}{Grebinski and Kucherov~\cite{grebinski-journal}}. They showed that a set of $f'(m)= \Theta(m/\log m)$ random coin-weighing queries $y^1, \ldots, y^{f'(m)}$, sampled from $\{0,1\}^m$ independently and uniformly at random, has, with some positive probability, the property that it distinguishes any two distinct coin-weighing instances in the following sense: For any two distinct vectors $v, w$ with non-negative integer entries such that $\sum_{s\in[m]}{v_s} \leq m/2$ and $\sum_{s\in[m]}{w_s} \leq m/2$, there exists an index $j\in[f'(m)]$ for which $\sum_{s\in[m],y^j_s=1}{v_s}\neq\sum_{s\in[m],y^j_s=1}{w_s}$.
It follows by the probabilistic method that, deterministically, there is a set $D\subseteq \{0,1\}^m$ of size at most $f'(m)$ such that the answers to the corresponding coin-weighing queries identify every possible coin-weighing instance. Hence we can replace the $f(m)$ random coin-weighing queries of each round by the $f'(m)$ coin-weighing queries corresponding to the fixed set $D$.

It remains to derandomize the choice of $r$ in each round. As before we consider $m:=4n/k'$ blocks of size $k'/4$, where $k'$ is the size of the sets $C_i$ at the beginning of a phase. To make sure that a constant fraction of all queries in a phase satisfy $\eq(z,r)\leq m/2$ (compare line~\ref{Bedingung} of Algorithm~\ref{algo:mastermind}), we ask a set of $k'$ queries such that, for each position $i\in[n]$, every color in $C_i$ is used at position $i$ in exactly one of these queries. (If all sets $C_i$ are equal, this can be achieved by simply asking $k'$ monochromatic queries.) The sum of all returned scores must be exactly $n$, and therefore we cannot get a score of more than $m/2 = 2n/k'$ for more than $k'/2$ queries. In this way we ensure that for at least $k-k'/2=k'/2$ queries we get a score of at most $m/2$.

As in the randomized version of our strategy, in each of these $k'/2$ queries at least half of the blocks must be $0$-blocks. We can identify those by the derandomized coin-weighing discussed above. Consider now a fixed block. As it has size $k'/4$, it can be a \emph{non}-0-block in at most $k'/4$ queries. Thus it is a $0$-block in at least $k'/2-k'/4=k'/4$ of the queries.

To summarize, we have shown that by asking $k'$ queries of the above form we get at least $k'/2$ queries of score at most $m/2$. For each of them we identify the $0$-blocks by coin-weighing queries. This allows us to exclude at least $k'/4$ colors at each position. I.e., as in the randomized version of our strategy we can reduce the number of colors by a constant factor using only $O(k'\cdot m/\log m)=O(n/\log(4n/k))$ queries. By similar calculations as before, the same asymptotic bounds follow. 

We abandon the color reduction scheme when $k'$ is a constant. At this point, we can solve the remaining problem in time $O(n)$ by repeatedly using the argument we used to determine the dummy colors in Section~\ref{sec:preliminaries}.

Note that all of the above can easily be implemented in polynomial time if we can solve the coin-weighing subproblems in polynomial time. An algorithm for doing the latter is given in the work of \ifLongCites{\cite{Bshouty09}}{Bshouty~\cite{Bshouty09}}. Using this algorithm as a building block, we obtain a deterministic polynomial-time strategy for Codebreaker that achieves the bounds stated in Theorem~\ref{thm:main}.
\end{proof}

\section{Mastermind with Black and White Answer-Pegs} \label{sec:black-and-white}


In this section, we analyze the Mastermind game in the classic version with both black and white answer-pegs. Interestingly, there is a strong general connection between the two versions. Roughly speaking, we can use a strategy for the $k = n$ black-peg game to learn which colors actually occur in the secret code of a black/white-peg game with $n$ positions and $n^2$ colors. Having thus reduced the number of relevant colors to at most $n$, Codebreaker can again use a $k = n$ black-peg strategy (ignoring the white answer-pegs) to finally determine the secret code. 

More precisely, for all $k, n \in \N$ let us denote by $b(n,k)$ the minimum (taken over all strategies) maximum (taken over all secret codes) expected number of queries needed to find the secret code in a black-peg Mastermind game with $k$ colors and $n$ positions. Similarly, denote by $bw(n,k)$ the corresponding number for the game with black and white answer-pegs. Then we show the following.

\begin{theorem}\label{thmbw}
  For all $k, n \in \N$ with $k \ge n$, \[bw(n,k) = \Theta(k/n + b(n,n)).\]
\end{theorem}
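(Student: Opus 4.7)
The plan is to establish the two directions of the equality separately. For the upper bound $bw(n,k)=O(k/n+b(n,n))$, the strategy falls into three stages. First, a \emph{color census}: partition $[k]$ into $\lceil k/n\rceil$ groups of at most $n$ colors and, for each group $G$, query a length-$n$ string in which every color of $G$ appears exactly once (padding with a dummy color if $|G|<n$). Since such a query uses pairwise distinct colors, the combined answer $\eq(z,x)+\pi(z,x)$ equals the number of distinct colors of $G$ that appear in $z$; because $z$ contains at most $n$ distinct colors overall, at most $n$ groups can have a positive score, shrinking the set of candidate colors to at most $n^2$ at a cost of $\lceil k/n\rceil$ queries. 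Second, I would invoke sparse coin weighing on the candidate set: at most $n$ of the $\le n^2$ candidates actually appear in $z$, and $O(n)$ random length-$n$ queries drawn from the candidate set give enough counting-type information (via $\eq+\pi$) to identify the set $S$ of colors in $z$ with constant probability per round, so a constant number of repetitions succeeds with high probability. Third, once $|S|\le n$ is determined, apply the black-peg strategy of Theorem~\ref{thm:main} to the resulting $(n,|S|)$-instance, adding at most $b(n,n)$ queries. Summing yields $O(k/n+n+b(n,n))=O(k/n+b(n,n))$, since $b(n,n)\ge n$.

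For the lower bound $bw(n,k)=\Omega(k/n+b(n,n))$ I would bound the two summands separately. The bound $\Omega(k/n)$ follows from a standard adversary argument: with fewer than $k/(3n)$ queries the total number of color--position slots used is at most $k/3$, so at least two colors $a,b$ never appear in any query; Codemaker can consistently realize two secrets that agree everywhere except at position $1$, where they take the values $a$ and $b$ respectively, producing identical answer transcripts. For the bound $\Omega(b(n,n))$ I would first note the monotonicity $bw(n,k)\ge bw(n,n)$ for $k\ge n$, proved by restricting Codemaker to use only colors in $[n]$ and observing that any query entry $>n$ can be replaced by a suitably chosen entry $\le n$ (e.g.\ a pre-computed dummy color) without revealing less information about $z\in[n]^n$. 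It then remains to prove $bw(n,n)=\Omega(b(n,n))$, which I would do by reducing $b(n,n)$ to $bw(n,n)$: given an optimal $bw(n,n)$-strategy $\sigma$ with $T$ queries, prepend $n$ monochromatic queries $(c,c,\ldots,c)$ for $c\in[n]$ to learn every multiplicity $\text{mult}_z(c)$; thereafter each $\sigma$-query $x$ can be simulated by one black-peg query, because $\pi(z,x)=\sum_c\min(\text{mult}_z(c),\text{mult}_x(c))-\eq(z,x)$ with $\text{mult}_x(c)$ and $\eq(z,x)$ both available. This yields $b(n,n)\le T+n$, and since $b(n,n)=\Omega(n)$ information-theoretically, $bw(n,n)=\Omega(b(n,n))$ follows.

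The main obstacle is stage two of the upper bound: identifying a set of at most $n$ positives among up to $n^2$ candidates using only $O(n)$ Mastermind queries, where each query is constrained to use at most $n$ distinct colors (so each ``weighing'' involves $n$ out of $\le n^2$ coins, rather than an arbitrary subset as in classical coin weighing). The sparse coin-weighing bounds in the literature are typically stated for unrestricted subsets, so one must verify that the argument in Section~\ref{sec:nloglogn} (based on Grebinski--Kucherov) survives this length restriction; this should be possible but requires some care in the probabilistic analysis. A secondary technical point is the monotonicity $bw(n,k)\ge bw(n,n)$ in the lower bound: to replace a query entry $x_i>n$ by an entry $\le n$ without changing the information content of the answer for $z\in[n]^n$, one should argue via the dummy colors already computed during the preprocessing of Section~\ref{sec:preliminaries} and check that neither $\eq$ nor $\pi$ gains spurious contributions.
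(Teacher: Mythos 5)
The lower-bound half of your proposal is essentially the paper's argument: the paper proves the $\Omega(k/n)$ term via a uniformly random monochromatic secret (which handles randomized strategies more directly than your two-secret adversary, though your version can be repaired), and it proves the $\Omega(b(n,n))$ term exactly as you propose, via $bw(n,k)+n+1\ge bw(n,n)$ (replacing colors $>n$ by color $1$ after learning the positions of color $1$) followed by $bw(n,n)=\Theta(b(n,n))$ using $n-1$ monochromatic queries and the identity $\pi(z,x)=\sum_c\min(\mathrm{mult}_z(c),\mathrm{mult}_x(c))-\eq(z,x)$. The genuine gap is stage two of your upper bound, and it is precisely the obstacle you flag yourself: you want to identify the at most $n$ secret colors among up to $n^2$ candidates with $O(n)$ queries by sparse coin weighing, but each Mastermind query can touch only $n$ of the $n^2$ ``coins'', so the Grebinski--Kucherov argument does not apply, and no justification is offered for ``constant probability per round''. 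This is not a technicality that ``survives with some care'' --- the paper does not achieve $O(n)$ for this subtask, and does not need to.

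The missing idea is that you are allowed to charge this stage to the $b(n,n)$ term rather than to $O(n)$. The paper proceeds in two steps. First (Lemma~\ref{lempartition}), by asking $O(n')$ color queries on random candidate subsets of size $\lceil|C_0|/n'\rceil$ --- chosen so that the expected answer is one --- Codebreaker finds by rejection sampling a partition of the candidates into $n'$ disjoint sets $C_1,\dots,C_{n'}$, each of size at most $n$ and each containing exactly one secret color; each attempt succeeds with constant probability. Second (Lemma~\ref{lemlast}), locating the unique secret color inside each $C_i$ \emph{is} a black-peg Mastermind instance with $n'$ positions and $k'\le n$ colors: for a guess $y'\in C_1\times\cdots\times C_{n'}$ one has $\eq(z',y')=\col(\{y'_1,\dots,y'_{n'}\})$, which is obtainable by a single color query. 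This stage therefore costs $b(n',k')=O(b(n,n))$ queries, which is exactly what the statement $bw(n,k)=O(k/n+b(n,n))$ permits. Without this reduction (or a genuinely new length-restricted coin-weighing result), your upper bound does not go through.
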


Combining this with Theorem~\ref{thm:main}, we obtain a bound of $O(n\log\log n)$ for black/white Mastermind with $n\leq k \leq n^2\log\log n$ colors, improving all previous bounds in that range.

For the case $k \leq n$ it is not hard to see that $bw(n,k) = \Theta(b(n,k))$, see Corollary~\ref{cor:bw} below. Together with Theorem~\ref{thmbw}, this shows that to understand black/white-peg Mastermind for all $n$ and $k$, it suffices to understand black-peg Mastermind for all $n$ and $k$.

Before proving Theorem~\ref{thmbw}, let us derive a few simple preliminary results on the relation of the two versions of the game.

\begin{lemma}\label{lem:bw}
  For all $n, k$, \[bw(n,k) \ge b(n,k) - k + 1.\]
\end{lemma}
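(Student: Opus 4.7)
The plan is to prove the equivalent inequality $b(n,k)\le bw(n,k)+k-1$ by turning any black/white strategy into a black-peg strategy at a cost of only $k-1$ extra queries. The intuition is that after a short preprocessing step that reveals the color histogram of the secret code $z$, the white-peg information contained in $\pi(z,x)$ becomes redundant: it can be recomputed from $\eq(z,x)$ alone.

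Concretely, I would first spend $k-1$ queries on the monochromatic strings $(c,c,\ldots,c)$ for $c=1,\ldots,k-1$. The black-peg answer $\eq(z,(c,\ldots,c))$ is exactly the number of occurrences of color $c$ in $z$, so these queries determine $n_c:=|\{i\in[n]\mid z_i=c\}|$ for $c=1,\ldots,k-1$, and $n_k=n-\sum_{c<k} n_c$ is then forced. Thus after $k-1$ queries the full multiset of colors in $z$ is known.

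The key identity is that for every query $x\in[k]^n$,
\[
  \eq(z,x)+\pi(z,x)=\sum_{c\in[k]}\min\bigl(n_c,\,|\{i\in[n]\mid x_i=c\}|\bigr),
\]
since an optimal matching between $z$ and $x$ (black pegs on coinciding positions, white pegs on further color coincidences across permuted positions) pairs up exactly $\min(n_c,|\{i:x_i=c\}|)$ copies of each color $c$. Hence, given the histogram $(n_c)_{c\in[k]}$ and a black-peg answer $\eq(z,x)$, Codebreaker can compute $\pi(z,x)$ himself.

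Now take any (possibly randomized, adaptive) black/white strategy $S^{bw}$ achieving the value $bw(n,k)$ against the worst-case secret. I construct a black-peg strategy $S^{b}$ as follows: first perform the $k-1$ monochromatic queries above and record $(n_c)_{c\in[k]}$; then run $S^{bw}$ step by step, where for each query $x$ issued by $S^{bw}$, $S^{b}$ asks $x$ in the black-peg game, receives $\eq(z,x)$, computes $\pi(z,x)$ from the identity, and feeds the pair $(\eq(z,x),\pi(z,x))$ back into $S^{bw}$. Because $S^{bw}$'s view is identical to a real black/white game, it terminates (having identified $z$) using the same number of queries in expectation as in the original game. Thus $S^{b}$ wins using at most $k-1+bw(n,k)$ queries in expectation against every secret, giving $b(n,k)\le bw(n,k)+k-1$, which is the claim.

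I don't expect a real obstacle here; the only subtle point is just verifying the multiset identity above and checking that the simulation remains well-defined when $S^{bw}$ is adaptive and randomized, which it is because each invocation of $S^{bw}$ only sees a legitimate transcript of (query, $\eq$, $\pi$) triples.
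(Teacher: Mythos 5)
Your proposal is correct and follows the paper's proof exactly: $k-1$ monochromatic queries to learn the color histogram of $z$, then simulate the black/white strategy by reconstructing $\pi(z,x)$ from $\eq(z,x)$ and the histogram. The explicit identity $\eq(z,x)+\pi(z,x)=\sum_{c}\min(n_c,|\{i: x_i=c\}|)$ that you verify is left implicit in the paper, but the argument is the same.
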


\begin{proof}
We show that we can simulate a strategy in the black/white Mastermind game by one receiving only black-pegs answers and using $k-1$ more guesses. Fix a strategy for black/white Mastermind. Our black-peg strategy first asks $k-1$ monochromatic queries. This tells us how often each of the $k$ color arises in the secret code. From now on, we can play the strategy for the black/white game. While we only receive black answer-pegs, we can compute the number of white pegs we would have gotten in the black/white game from the just obtained information on how often each color occurs in the code. With this information available, we can indeed play as in the given strategy for black/white Mastermind.
\end{proof}

Lemma~\ref{lem:bw} will be used to prove that the $b(n,n)$ term in the statement of Theorem~\ref{thmbw} cannot be avoided. As a corollary, it yields that white answer-pegs are not extremely helpful when $k = O(n)$.

\begin{corollary}\label{cor:bw}
  For all $k \le n$, \[bw(n,k) = \Theta(b(n,k)).\]
\end{corollary}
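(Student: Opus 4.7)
The plan is to prove the two directions of $bw(n,k) = \Theta(b(n,k))$ separately.

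The direction $bw(n,k) \le b(n,k)$ is immediate: any strategy for black-peg Mastermind remains valid in the black/white-peg variant simply by ignoring the white answer-pegs, so in particular an optimal black-peg strategy achieves at most $b(n,k)$ expected queries in the black/white setting.

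For the non-trivial direction $b(n,k) = O(bw(n,k))$ I would invoke Lemma~\ref{lem:bw}, which gives $b(n,k) \le bw(n,k) + (k-1)$. This reduces the task to showing $k-1 = O(bw(n,k))$ in the range $k \le n$. For this I would use an information-theoretic lower bound: each answer in the black/white game is a pair $(\eq(z,x),\pi(z,x))$ with $\eq(z,x)+\pi(z,x) \le n$, giving at most $B := (n+1)(n+2)/2$ distinct answers per query. A standard decision-tree argument (Kraft / Shannon for the average depth under a uniform prior on codes, combined with Yao's minimax principle to handle randomized strategies) then gives
\[
  bw(n,k) \;\ge\; \frac{n\log k}{\log B} \;\ge\; \frac{n\log k}{2\log(n+1)}.
\]

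It remains to verify that this lower bound is $\Omega(k)$ uniformly for $k \le n$. The key observation is that $x \mapsto (\log x)/x$ is monotonically decreasing on $[e,\infty)$, so for $4 \le k \le n$ one has $(\log k)/k \ge (\log n)/n$, equivalently $n\log k \ge k\log n$; dividing by $2\log(n+1)$ yields $bw(n,k) = \Omega(k)$. The handful of boundary cases $k \in \{1,2,3\}$ are absorbed into the hidden constants, since there $k-1 = O(1)$ while $bw(n,k) \ge 1$ whenever $k \ge 2$. Combining everything produces $b(n,k) \le bw(n,k) + (k-1) = O(bw(n,k))$, which together with the first direction proves the corollary. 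I do not anticipate any real obstacle: the proof is essentially the composition of Lemma~\ref{lem:bw} with a textbook information-theoretic bound, and the only mild point to check is the elementary monotonicity comparison $n \log k = \Omega(k \log n)$ for $k \le n$.
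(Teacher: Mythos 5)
Your proof is correct and follows essentially the same route as the paper's: the trivial inequality $bw(n,k)\le b(n,k)$, Lemma~\ref{lem:bw}, and the information-theoretic lower bound together with the observation that $n\log k/\log n=\Omega(k)$ for $k\le n$. The only cosmetic difference is that you establish $k-1=O(bw(n,k))$ uniformly via the monotonicity of $x\mapsto(\log x)/x$, whereas the paper argues by cases ($k=o(n)$ versus $k=\Theta(n)$, and within the latter $b(n,k)=O(n)$ versus $\omega(n)$).
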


\begin{proof}
  Obviously, $bw(n,k) \le b(n,k)$ for all $n,k$. If $k = o(n)$, then the information theoretic lower bound $b(n,k) = \Omega(n \log k / \log n)$ is of larger order than $k$, hence the lemma above shows the claim. For $k = \Theta(n)$, note first that both $b(n,k)$ and $bw(n,k)$ are in $\Omega(n)$ due to the information theoretic argument. If $b(n,k) = O(n)$, there is nothing to show. If $b(n,k) = \omega(n)$, we again invoke Lemma~\ref{lem:bw}.
\end{proof}

In the remainder of this section, we prove Theorem~\ref{thmbw}. 
To describe the upper bound, let us fix the following notation. Let $C$ be the set of all available colors and $k = |C|$. Denote by $z \in C^n$ the secret code chosen by Codemaker. Denote by $C^* := \{z_i \mid i \in [n]\}$ the (unknown) set of colors in $z$.

Codebreaker's strategy leading to the bound of Theorem~\ref{thmbw} consists of roughly these three steps. 

(1) Codebreaker first asks roughly $k/n$ guesses containing all colors. Only colors in a guess receiving a positive answer can be part of the secret code, so this reduces the number of colors to be regarded to at most $n^2$. Also, Codebreaker can learn from the answers the cardinality $n'$ of $C^*$, that is, the number of distinct colors in the secret code.

(2)
By asking an expected number of $\Theta(n')$ (dependent) random queries, Codebreaker learns $n'$ disjoint sets of colors of size at most $n$ such that each color of $C^*$ is contained in exactly one of these sets. Denote by $k'$ the cardinality of a largest of these sets. 

(3) Given such a family of sets, Codebreaker can learn $C^*$ with an expected number of $b(n', k')$ queries by simulating an optimal black-peg Mastermind strategy. Once $C^*$ is known, an expected number of $b(n,n')$ queries determine the secret code, using an optimal black-peg strategy for $n'$ colors.

Each of these steps is made precise in the following. Before doing so, we remark that after a single query Codebreaker may detect $|C^* \cap X|$ for any set $X$ of at most $n$ colors via a single Mastermind query to be answered by black and white answer-pegs.

\begin{lemma}\label{lemfirst}
  For an arbitrary set $X$ of at most $n$ colors, let $\col(X) := |C^* \cap X|$, the number of colors of $X$ occurring in the secret code. After a single initial query, Codebreaker can learn $\col(X)$ for any $X$ via a single Mastermind query to be answered by black and white pegs.
\end{lemma}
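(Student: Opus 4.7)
The plan rests on the well-known identity
\[
\eq(z,x) + \pi(z,x) \;=\; \sum_{c \in [k]} \min\!\bigl(m_c(x),\, m_c(z)\bigr),
\]
where $m_c(\cdot)$ denotes the multiplicity of color $c$ in a length-$n$ string. My plan is to spend the one permitted initial query on pinning down $m_d(z)$ for a single fixed ``filler'' color $d$, and then, for each query set $X$, to use this knowledge together with one carefully crafted Mastermind query to recover $\col(X)$ via the identity above.

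First I would pick any fixed color $d$ and, as the initial query, ask the monochromatic string $(d,d,\ldots,d)$. All returned pegs are necessarily black, and they count exactly $n_d := m_d(z)$; after this one query $n_d$ is known, and in particular so is whether $d\in C^*$.

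Given an arbitrary $X\subseteq[k]$ with $t:=|X|\leq n$, I would enumerate $X=\{c_1,\ldots,c_t\}$ and query the string $x$ whose first $t$ positions carry $c_1,\ldots,c_t$ (once each) and whose remaining $n-t$ positions are all filled with $d$. Let $T$ denote the resulting total (black plus white) peg count returned by Codemaker. The identity yields
\[
T \;=\; \sum_{c\in X\setminus\{d\}} \min\!\bigl(1, m_c(z)\bigr) \;+\; \min\!\bigl(m_d(x),\, n_d\bigr),
\]
where $m_d(x) = n-t$ if $d \notin X$ and $m_d(x) = n-t+1$ if $d \in X$. The first sum equals $\col(X)$ when $d \notin X$, and equals $\col(X) - 1$ or $\col(X)$ (according as $d \in C^*$ or not) when $d \in X$. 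Since $t$, $n_d$, whether $d \in X$, and whether $d \in C^*$ are all known to Codebreaker, every quantity on the right-hand side apart from $\col(X)$ is computable, so $\col(X)$ is determined from $T$.

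The only subtle point is the case $d\in X$, where the filler color is effectively counted twice in $x$; the whole purpose of the initial monochromatic query was to resolve this ambiguity by revealing whether $d$ actually appears in the secret code. Beyond this careful bookkeeping with the total-peg identity, there is no real obstacle.
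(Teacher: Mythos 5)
Your proof is correct and follows essentially the same route as the paper: one initial monochromatic query in a fixed filler color to learn its multiplicity in the secret code, then a single query built from $X$ padded with that filler color, with $\col(X)$ recovered from the total black-plus-white peg count via the identity $\eq(z,x)+\pi(z,x)=\sum_c\min(m_c(x),m_c(z))$. Your case analysis for $d\in X$ versus $d\notin X$ matches the paper's formula $\col(X)=y-\min\{n-\nu,b\}$ (resp.\ $y-\min\{n-\nu,b-1\}$) exactly.
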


\begin{proof}
  As the single initial query, Codebreaker may ask $(1,\ldots,1)$, the code consisting of color~1 only. Denote by $b$ the number of black pegs received (there cannot be a white answer-peg). This is the number of occurrences of color~$1$ in the secret code. 
  
  Let $X \subseteq C$, $\nu := |X| \le n$. To learn $\col(X)$, Codebreaker extends $X$ to a multiset of $n$ colors by adding the color 1 exactly $n-\nu$ times and guesses a code arbitrarily composed of this multiset of colors. Let $y$ be the total number of (black and white) answer-pegs received. Then $\col(X) = y - \min\{n-\nu,b\}$, if $1 \notin X$ or $b = 0$, and $\col(X) = y - \min\{n-\nu,b-1\}$ otherwise. 
\end{proof}

To ease the language, we shall call a query determining $\col(X)$ a \emph{color query}. We now show that using roughly $k/n$ color queries, Codebreaker can learn the number $|C^*|$ of different colors occurring in the secret code and exclude all but $n |C^*|$ colors.
 
\begin{lemma}\label{lemreduce}
  With  $\lceil k/n \rceil$ color queries, Codebreaker can learn both $|C^*|$ and a superset $C_0$ of $C^*$ consisting of at most $n |C^*|$ colors.
\end{lemma}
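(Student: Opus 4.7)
The plan is to exploit the color-query primitive from Lemma~\ref{lemfirst} by partitioning the full color set $C$ into roughly $k/n$ pairwise disjoint blocks, each of size at most $n$, and probing each block once.

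More precisely, I would write $C = X_1 \cup X_2 \cup \dots \cup X_t$ as a disjoint union with $t = \lceil k/n \rceil$ and $|X_i|\leq n$ for every $i$. Since every $X_i$ has size at most $n$, Lemma~\ref{lemfirst} lets Codebreaker determine $\col(X_i)=|C^*\cap X_i|$ using a single Mastermind query each, for a total of $\lceil k/n\rceil$ color queries (the ``single initial query'' needed by Lemma~\ref{lemfirst} is only asked once and can be counted separately or absorbed, as it is already used in the surrounding argument).

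From the collected answers, $|C^*|$ is recovered immediately as $|C^*|=\sum_{i=1}^{t}\col(X_i)$, since the $X_i$ partition $C$ and $C^*\subseteq C$. For the superset, I would define
\[
C_0 := \bigcup_{i : \col(X_i)\geq 1} X_i.
\]
By construction $C^*\subseteq C_0$, because each color of $C^*$ sits in exactly one block $X_i$ and that block then satisfies $\col(X_i)\geq 1$. The number of blocks contributing to $C_0$ is at most $|C^*|$ (each such block accounts for at least one element of $C^*$ in the sum above), and each contributing block has size at most $n$, so $|C_0|\leq n\cdot|C^*|$.

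There is no real obstacle here; the only thing to be mindful of is the bookkeeping around Lemma~\ref{lemfirst}, which requires one preliminary monochromatic query in order to convert a guess of a multiset of colors into the value of $\col$. That overhead is a single additive query and does not affect the $\lceil k/n\rceil$ bound, so the lemma follows.
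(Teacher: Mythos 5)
Your proposal is correct and matches the paper's proof essentially verbatim: partition $C$ into $\lceil k/n\rceil$ sets of size at most $n$, issue one color query per set, sum the answers to get $|C^*|$, and take $C_0$ to be the union of the sets with positive answer. The bound $|C_0|\le n|C^*|$ follows exactly as you argue, so nothing further is needed.
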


\begin{proof}
   Let $X_1, \ldots, X_{\lceil k/n \rceil}$ be a partition of $C$ into sets of cardinality at most $n$. By asking the corresponding  $\lceil k/n \rceil$ color queries, Codebreaker immediately learns $|C^*| := \sum_{i = 1}^{\lceil k/n \rceil} \col(X_i)$. Also, $C_0 := \bigcup \{X_i \mid \col(X_i) > 0\}$ is the desired superset.
\end{proof}

\begin{lemma}\label{lempartition}
  Assume that Codebreaker knows the number $n' = |C^*|$ of different colors in $z$ as well as a set $C_0 \supseteq C^*$ of colors such that $|C_0| \le n |C^*|$.
  
  Then with an expected number of $\Theta(n')$ color queries, Codebreaker can find a family $C_1, \ldots, C_{n'}$ of disjoint subsets of $C_0$, each of size at most $ \lceil |C_0|/n' \rceil \le n$, such that $C^* \subseteq C_1 \cup \ldots \cup C_{n'}$ and $|C^* \cap C_i| = 1$ for all $i \in [n']$. 
\end{lemma}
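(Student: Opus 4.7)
The plan is to randomly partition $C_0$ into $n'$ groups of nearly equal size, use $n'$ color queries (as provided by Lemma~\ref{lemfirst}) to learn $\col$ on each group, retain those with $\col=1$ as isolated, and recurse on the union of groups with $\col\ge 2$. Concretely, partition $C_0$ uniformly at random into groups $G_1,\dots,G_{n'}$ each of size at most $m:=\lceil|C_0|/n'\rceil\le n$, let $J:=\{i\,:\,\col(G_i)=1\}$, and recurse on $(C',n'')$, where $C':=\bigcup_{i\,:\,\col(G_i)\ge2}G_i$ and $n'':=n'-|J|$. The top-level $G_i$ for $i\in J$ are output directly, together with the $n''$ sets produced by the recursive call.

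The first step of the verification is to check that the recursion preserves both the lemma's input hypothesis and the required size bound. Since every bad group contributes at least two colors to $n''$, the number of bad groups is at most $n''/2$, so $|C'|\le(n''/2)m$. This immediately yields $|C'|\le n\cdot n''$ (so the recursive instance still satisfies $|C'|\le n|C^*|$) and $\lceil|C'|/n''\rceil\le\lceil m/2\rceil\le m$ (so every set produced in the recursion also satisfies the original size bound $\lceil|C_0|/n'\rceil$).

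The core probabilistic step is to show that a constant fraction of the colors become isolated in expectation. The number of colors in $G_i$ is hypergeometric with mean~$1$, and a direct product bound gives
\[
\Pr[\col(G_i)=0]\;=\;\prod_{j=0}^{m-1}\frac{|C_0|-n'-j}{|C_0|-j}\;\le\;\Bigl(1-\frac{n'}{|C_0|}\Bigr)^{m}\;\le\;\frac1e,
\]
where each factor is bounded using $n'/|C_0|\ge 1/m$. Letting $N_j$ count the groups with exactly $j$ colors of $C^*$, the identities $\sum_j N_j=n'$ and $\sum_j jN_j=n'$ force $N_0=\sum_{j\ge2}(j-1)N_j\ge\sum_{j\ge2}N_j$, and therefore $\mathbb{E}[|J|]=\mathbb{E}[N_1]\ge n'-2\mathbb{E}[N_0]\ge(1-2/e)n'>n'/4$, so $\mathbb{E}[n'']\le(3/4)n'$.

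Setting $T(n')$ to be the expected number of color queries used by the whole recursive procedure, the recurrence $T(n')\le n'+\mathbb{E}[T(n'')]$ combined with $\mathbb{E}[n'']\le(3/4)n'$ solves by induction to $T(n')\le 4n'=O(n')$, which is the desired bound. I expect the main obstacle to be establishing the constant-factor reduction per round, which is settled cleanly by the hypergeometric estimate above; the only remaining subtlety is the case in which $|C_0|$ is not divisible by $n'$ and two slightly different group sizes appear, but the same $1/e$-type bound carries over with at most a constant loss, so the asymptotic conclusion is unaffected.
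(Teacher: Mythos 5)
Your approach is genuinely different from the paper's. The paper draws \emph{one} random set of size $\lceil|C_0|/n'\rceil$ at a time, keeps it only if the color query answers $1$, and shows each draw succeeds with constant probability (a rejection-sampling scheme); you instead partition all of $C_0$ at once, query every part, harvest the parts answering $1$, and recurse on the union of the unresolved parts. Both routes aim at $\Theta(n')$ expected queries, and your bookkeeping for the recursion is fine: disjointness and coverage are clear, the size bound $\lceil|C'|/n''\rceil\le\lceil|C_0|/n'\rceil\le n$ is correctly propagated, and the recurrence closes via $\sum_t \mathbb{E}[n'_t]\le n'\sum_t(3/4)^t$ once a constant-factor expected decrease per round is established.

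That last point, however, is where you have a genuine gap, and it is exactly the issue you wave off as a ``constant loss.'' Your inequality $\mathbb{E}[N_1]\ge n'-2\,\mathbb{E}[N_0]$ has no slack: it yields nothing as soon as $\Pr[\col(G_i)=0]$ reaches $1/2$, and $1-2/e\approx 0.26$ is already a thin margin. Your estimate $\Pr[\col(G_i)=0]\le e^{-m'n'/|C_0|}$ gives $e^{-1}<1/2$ only for groups of the \emph{larger} size $\lceil|C_0|/n'\rceil$; for a group of size $\lfloor q\rfloor$ with $q:=|C_0|/n'$ the exponent is $\lfloor q\rfloor/q$, which falls below $\ln 2$ (so the bound exceeds $1/2$) once $q$ drops below roughly $1/(1-\ln 2)\approx 3.3$. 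This is not a negligible corner case: your own recursion forces $|C'|/n''\le m/2$, so the ratio $q$ at least halves every round and the argument is guaranteed to enter the regime where it proves no progress at all. The repair is to drop the detour through $N_0$ and lower-bound $\Pr[\col(G_i)=1]$ directly by an absolute constant for both group sizes $\lfloor q\rfloor$ and $\lceil q\rceil$ --- this is precisely the hypergeometric product estimate carried out in the paper's proof of this lemma --- which gives $\mathbb{E}[N_1]=\Omega(n')$ and lets the rest of your argument go through unchanged.
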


\begin{proof}
  Roughly speaking, Codebreaker's strategy is to ask color queries having an expected answer of one. With constant probability, such a query contains exactly one color from $C^*$. Below is a precise formulation of this strategy.
  
\begin{algorithm2e} 
\label{algo:blackwhite}
 \While{$n' > 0$}{
    $k' \assign \lceil |C_0|/n' \rceil$\;
    Let $C_{n'}$ be a random subset of $C$ with $|C_{n'}| = k'$\; 
    Ask the color query $C_{n'}$\;
    \If{$\col(C_{n'}) = 1$}{
       $C_0 \assign C_0 \setminus C_{n'}$\;
       $n' \assign n' - 1$\;       
       }
     }
\caption{Codebreaker's strategy}
\end{algorithm2e}

For the analysis, note first that the value of $k'$ during the application of the above strategy does not increase. In particular, all sets $C_i$ defined and queried have cardinality at most $\lceil |C_0|/n' \rceil \le n$. 
It is also clear that the above strategy constructs a sequence of disjoint $C_i$ and that for each color occurring in $z$ there is exactly one $C_i$ containing this color. 

It remains to prove the estimate on the expected number of queries. To this aim, we first note that throughout a run of this strategy, $n'$ is the number of colors of $C^*$ left in $C_0$. Hence the event ``$\col(C_{n'}) = 1$'' occurs with probability
\begin{align*}
\frac{n' k' (|C_0| - n')  \ldots (|C_0| - n' - k' + 2)}{|C_0| \ldots (|C_0| - k' + 1)} 
&\ge \frac{(|C_0| - n') \ldots (|C_0| - n' - k' + 2)}{(|C_0|-1) \ldots (|C_0| - k' + 1)} \\
& \ge \bigg( \frac{|C_0| - n' - k' + 2}{|C_0| - k' + 1} \bigg)^{k'-1} = \bigg( 1 - \frac{n' - 1}{|C_0| - k' + 1} \bigg)^{k'-1}\\
& \ge \bigg( 1 - \frac{n' - 1}{|C_0| - (|C_0|/n')} \bigg)^{k'-1}\\
& \ge \bigg( 1 - \frac{|C_0| / (k'-1)}{|C_0| - (|C_0|/n')} \bigg)^{k'-1}\\
& \ge \bigg( 1 - \frac{1}{(k'-1)(1 - 1/n')} \bigg)^{k'-1},\\
\end{align*}
which is bounded from below by a constant (the later estimates assume $n' \ge 2$; for $n'=1$ the second term of the sequence of inequalities already is one).

Consequently, with constant probability the randomly chosen $C_{n'}$ satisfies ``$\col(C_{n'}) = 1$''. Hence after an expected constant number of iterations of the while-loop, such a $C_{n'}$ will be found. Since each such success reduces the value of $n'$ by one, a total expected number of $\Theta(|C^*|)$ iterations suffices to find the desired family of sets $(C_i)_{i \in [n']}$.
\end{proof}

Given a family of sets as just constructed, Codebreaker can simulate a black-peg strategy to determine $C^*$.

\begin{lemma}\label{lemlast}
  Let $C_1, \ldots, C_{n'}$ be a family of disjoint subsets of $C$ such that $C^* \subseteq C_1 \cup \ldots \cup C_{n'}$ and $|C^* \cap C_i| = 1$ for all $i \in [n']$. Assume that $k' := \max\{|C_i| \mid i \in [n']\} \le n$. Then Codebreaker can detect $C^*$ using an expected number of $b(n', k')$ color queries.
\end{lemma}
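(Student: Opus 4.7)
The plan is to reduce the problem of identifying $C^*$ to a virtual black-peg Mastermind instance on $n'$ positions and $k'$ colors, and then to invoke the definition of $b(n',k')$ together with the color-query mechanism of Lemma~\ref{lemfirst}.

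First I would fix, for each bucket $C_i$, an arbitrary ordering of its elements $c_{i,1},\ldots,c_{i,|C_i|}$. Because $|C^*\cap C_i|=1$, the set $C^*$ is uniquely encoded by the vector $\tilde z\in[k']^{n'}$ defined by $c_{i,\tilde z_i}\in C^*$, with $\tilde z_i\in[|C_i|]\subseteq[k']$. Recovering $\tilde z$ is therefore equivalent to recovering $C^*$, and the task of recovering $\tilde z$ is exactly an $(n',k')$-black-peg Mastermind problem (any a priori restriction such as $\tilde z_i\leq|C_i|$ can only help Codebreaker, so the bound $b(n',k')$ still applies).

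Next I would simulate a single virtual black-peg query by one color query in the original game. Given a virtual guess $\tilde x\in[k']^{n'}$, define
\[X:=\{c_{i,\tilde x_i}\mid i\in[n'],\ \tilde x_i\leq|C_i|\}.\]
Since the $C_i$ are disjoint, $|X|\leq n'\leq n$ (using $n'=|C^*|\leq n$, as $C^*$ is the color set of an $n$-letter string), so by Lemma~\ref{lemfirst} a single Mastermind query suffices to return
\[\col(X)=\bigl|\{i\in[n']:\tilde x_i\leq|C_i|\text{ and }c_{i,\tilde x_i}\in C^*\}\bigr|=\bigl|\{i\in[n']:\tilde z_i=\tilde x_i\}\bigr|=\eq(\tilde z,\tilde x),\]
which is precisely the black-peg answer in the virtual game. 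Feeding any (randomized) strategy achieving the bound $b(n',k')$ through this simulation therefore recovers $\tilde z$, and hence $C^*$, in an expected number of at most $b(n',k')$ color queries.

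The only step requiring any care is the case $|C_i|<k'$: a virtual guess with $\tilde x_i>|C_i|$ must contribute $0$ to $\eq(\tilde z,\tilde x)$, and my construction of $X$ achieves this by simply dropping the corresponding element, which is consistent since $\tilde z_i\leq|C_i|<\tilde x_i$ precludes a match at position $i$. Beyond this bit of bookkeeping the reduction is essentially transparent, so I do not expect a genuine obstacle.
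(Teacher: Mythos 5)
Your proposal is correct and follows essentially the same route as the paper: both encode $C^*$ as the unique string $z'\in C_1\times\cdots\times C_{n'}$, observe that by disjointness $\eq(z',y')=\col(\{y'_1,\ldots,y'_{n'}\})$ so each virtual black-peg query is realized by one color query, and invoke the fact that the particular (restricted) color sets at each position can only help Codebreaker, so $b(n',k')$ queries suffice. Your explicit relabelling into $[k']^{n'}$ and the handling of indices $\tilde x_i>|C_i|$ is just a more pedantic phrasing of the same reduction.
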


\begin{proof}
  Let $z' \in C_1 \times \ldots \times C_{n'}$ be the unique such string consisting of colors in $C^*$ only. Note that in black-peg Mastermind, the particular sets of colors used at each position are irrelevant. Hence there is a strategy for Codebreaker to detect $z'$ using an expected number of $b(n',k')$ guesses from $C_1 \times \ldots \times C_{n'}$ and receiving black-peg answers only. 
  
  We now show that for each such query, there is a corresponding color query in the $(n,k)$ black/white Mastermind game giving the same answer. Hence we may simulate the black-peg game searching for $z'$ by such color queries. Since $z'$ contains all colors of $C^*$ and no other colors, once found, it reveals the set of colors occurring in the original secret code $z$.
  
  Let $y' \in C_1 \times \ldots \times C_{n'}$ be a query in the black-peg Mastermind game searching for $z'$. For each position $i\in[n']$, we have $z'_i=y'_i$ if and only if $y_i'\in C_i$ is the unique color from $C_i$ that is in $C^*$. As moreover the sets $(C_i)_{i\in [n']}$ are disjoint, we have $\eq(z',y') = \col(\{y'_1, \ldots, y'_{n'}\})$, and we can obtain this value (i.e., the black-peg answer for the guess $y'$ relative to $z'$)  by a color query relative to $z$.
  \end{proof}

%

Note that if our only goal is to find out $C^*$, then for $k \ll n^2$ we can be more efficient by asking more color queries in Lemma~\ref{lemreduce}, leading to a smaller set $C_0$, to smaller sets $C_i$ in Lemma~\ref{lempartition}, and thus to a smaller $k'$ value in Lemma~\ref{lemlast}. Since this will not affect the asymptotic bound for the total numbers of queries used in the black/white-peg game, we omit the details.

\begin{proof}[Proof of Theorem~\ref{thmbw}]
The upper bound follows easily from applying Lemmas~\ref{lemfirst} to~\ref{lemlast}, which show that Codebreaker can detect the set $C^*$ of colors arising in the secret code $z$ with an expected number of $1 + \lceil k/n \rceil + O(n) + b(n,n)$ guesses. Since $|C^*| \le n$, he can now use a strategy for black-peg Mastermind and determine $z$ with another expected number of $b(n,n)$ guesses. Note that $b(n,n) = \Omega(n)$, so this proves the upper bound.

We argue that this upper bound is optimal apart from constant factors. Assume first that the secret code is a random monochromatic string (Codemaker may even announce this). Fix a (possibly randomized) strategy for Codebreaker. With probability at least 1/2, this strategy does not use the particular color in any of the first $k/(2n)$ guesses. It then also did not guess the correct code. Hence the expected number of queries necessary to find the code is at least $k/(4n)$. 

We finally show that for $k \ge n$, also the $b(n,n)$ term cannot be avoided. By the information theoretic argument, there is nothing to show if $b(n,n) = \Theta(n)$. Hence assume $b(n,n) = \omega(n)$. We will show $bw(n,k) + n + 1 \ge bw(n,n)$. The claim then follows from $bw(n,n) = \Theta(b(n,n))$ (Corollary~\ref{cor:bw}).

We show that we can solve the $k=n$ color Mastermind game by asking $n+1$ preliminary queries and then simulating a strategy for black/white Mastermind with $n$ positions and $k > n$. As in Section~\ref{sec:preliminaries}, we use $n+1$ queries to learn for each position whether it has color $1$ or not. We then simulate a given strategy for $k > n$ colors as follows. In a $k$-color query, replace all colors greater than $n$ by color $1$. Since we know the positions of the pegs in color $1$, we can reduce the answers by the contribution of these additional 1-pegs in the query. This gives the answer we would have gotten in reply to the original query (since the secret code does not contain colors higher than $n$). Consequently, we can now simulate the $k$-color strategy in an $n$-color Mastermind game.
\end{proof}

\section{Non-Adaptive Strategies}
\label{sec:nonadaptive}

When analyzing the performance of non-adaptive strategies, it is not very meaningful to ask for the number of queries needed until the secret code is queried for the first time. 
Instead we ask for the number of queries needed to \emph{identify} it.

In their work on the $2$-color black-peg version of Mastermind, \ifLongCites{\cite{Erd63}}{Erd{\H{o}}s and R\'enyi~\cite{Erd63}} showed that random guessing needs, with high probability, $(2+o(1)) n /\log n$ queries to identify the secret code, and that this is in fact best possible among \emph{non-adaptive} winning strategies. The upper bound was derandomized by \ifLongCites{\cite{Lindstroem64,Lindstroem65}}{Lindstr\"om~\cite{Lindstroem64,Lindstroem65}} and, independently, by \ifLongCites{\cite{CantorM66}}{Cantor and Mills~\cite{CantorM66}}. That is, for 2-color black-pegs Mastermind a deterministic non-adaptive winning strategy using $(2+o(1)) n /\log n$ guesses exists, and no non-adaptive strategy can do better. 

For adaptive strategies, only a weaker lower bound of $(1+o(1))n /\log n$ is known. This bound results from the information-theoretic argument mentioned in Section~\ref{sec:previous}. 
It remains a major open problem whether there exists an adaptive strategy that achieves this bound. In fact, it is not even known whether adaptive strategies can outperform the random guessing strategy by \emph{any} constant factor. 

Here in this section we prove that for Mastermind with $k=\Theta(n)$ colors, adaptive strategies are indeed more powerful than non-adaptive ones, and outperform them even in order of magnitude. 
More precisely, we show that any non-adaptive strategy needs $\Omega(n \log n)$ guesses. Since we know from Section~\ref{sec:nloglogn} that adaptively we can achieve a bound of $O(n \log \log n)$, this separates the performance of non-adaptive strategies from that of adaptive ones. Our result answers a question left open in~\cite{Goddard03}.

The $\Omega(n\log n)$ bound for non-adaptive strategies is tight. As we will show in Theorem~\ref{thm:non-adaptive-lb} below, there exists a deterministic non-adaptive strategy that achieves the bound up to constant factors.

\subsection{Lower Bound for Non-Adaptive Strategies}

For the formal statement of the bound, we use the following notation.
A deterministic non-adaptive strategy is a fixed ordering $x^1, x^2, \ldots, x^{k^n}$ of all possible guesses, i.e., the elements of $[k]^n$. A randomized non-adaptive strategy is a probability distribution over such orderings. For a given secret code $z\in[k]^n$, we ask for the smallest index $j$ such that the queries $x^1, \ldots, x^j$ together with their answers $\eq(z,x^1), \ldots, \eq(z,x^j)$ uniquely determine $z$. 

Mastermind with non-adaptive strategies is also referred to as \emph{static} Mastermind~\cite{Goddard03}. 

\begin{theorem} 
\label{thm:nonadaptive}
For any (randomized or deterministic) non-adaptive strategy for black-peg Mastermind with $n$ positions and $k$ colors, the expected number of queries needed to determine a secret code $z$ sampled uniformly at random from $[k]^n$ is
$\Omega \left( \frac{n \log k}{\max\{\log(n/k), 1\}} \right)$.
\end{theorem}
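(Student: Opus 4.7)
The plan is to deploy an entropy compression argument: since $z$ is uniform on $[k]^n$ and a winning strategy's answers must pin it down exactly, the total entropy produced by the answer sequence must reach $n\log k$ bits. The key observation, not captured by the trivial information-theoretic bound, is that each individual answer $\eq(z,x^j)$ is a binomial random variable concentrated around $n/k$ and therefore carries far fewer than the naive $\log(n+1)$ bits of entropy.

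Step 1. Fix a deterministic non-adaptive strategy with queries $x^1, x^2, \dots \in [k]^n$, and let $z$ be uniform on $[k]^n$. For every fixed $x^j$ the answer $A_j := \eq(z,x^j)$ is $\mathrm{Bin}(n,1/k)$ with variance at most $n/k$. Using the standard fact that an integer-valued random variable with variance $\sigma^2$ has Shannon entropy at most $\tfrac{1}{2}\log(2\pi e(\sigma^2 + 1/12))$, I obtain $H(A_j) \leq H_{\max}$ with $H_{\max} = O(\max\{\log(n/k),\, 1\})$ (the $1$ covers the regime $k \ge n$, where the binomial is concentrated near $0$).

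Step 2. Let $T = T(z)$ be the first index at which the queries so far determine $z$, set $q := c \cdot n \log k / H_{\max}$ for a small constant $c > 0$ to be chosen, and write $A^q := (A_1,\ldots, A_q)$. The indicator $\mathbf{1}_{\{T \le q\}}$ is a deterministic function of $A^q$ (it is the event that $A^q$ has a unique preimage under $z \mapsto (\eq(z,x^j))_{j\leq q}$), and conditional on $T\le q$ the code $z$ itself is a function of $A^q$. Hence
\[
H(z \mid A^q) \;\leq\; \Pr[T > q]\cdot n \log k.
\]
On the other hand, sub-additivity of entropy gives
\[
H(z \mid A^q) \;\geq\; H(z) - H(A^q) \;\geq\; n \log k - q\,H_{\max}.
\]
Combining the two yields $\Pr[T > q] \ge 1 - q H_{\max}/(n \log k) = 1 - c$. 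Setting $c = 1/2$ gives $\Pr[T > q] \geq 1/2$, hence $\mathbb{E}[T] \ge q/2 = \Omega\bigl(n \log k / \max\{\log(n/k), 1\}\bigr)$. For a randomized non-adaptive strategy, I simply average this inequality over the strategy's internal randomness (Yao's minimax principle in one direction): each deterministic instantiation satisfies the bound, so the expectation does too.

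The hard part will be the entropy estimate in Step 1. The trivial bound $H(A_j) \leq \log(n+1)$ only reproduces the weak information-theoretic lower bound $\Omega(n\log k/\log n)$; to obtain the sharper $\max\{\log(n/k),1\}^{-1}$ factor one really has to exploit the concentration of the binomial (its effective support has size $O(\sqrt{n/k}+1)$, not $n+1$). Verifying uniformly across the three regimes $k \ll n$, $k = \Theta(n)$, and $k \gg n$ that the variance-based entropy bound yields exactly the claimed rate is the only delicate calculation; the rest of the argument is standard information theory together with the observation that $\{T \le q\}$ is $\sigma(A^q)$-measurable.
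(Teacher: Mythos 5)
Your proposal is correct and follows essentially the same route as the paper's proof: both arguments pit the entropy $n\log k$ of the uniform secret against the subadditive bound $\sum_j H(\eq(z,x^j)) = O(q\max\{\log(n/k),1\})$ coming from the concentration of the binomial answers, and both reduce the randomized case to the deterministic one by convexity. The only cosmetic differences are that you phrase the endgame as a Fano-type inequality on $H(z\mid A^q)$ where the paper introduces an auxiliary ``fail'' variable and a fixed half-determination time $s$, and that you justify the per-answer entropy bound via the maximum-entropy-under-variance-constraint inequality rather than the asymptotic binomial entropy formula the paper cites.
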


Theorem~\ref{thm:nonadaptive} shows, in particular, that for any non-adaptive strategy there exists a secret code $z\in[k]^n$ which can only be identified after $\Omega\left(n \log k / \max\{\log(n/k), 1\} \right)$ queries. For $k\geq n$, this is an improvement of $\Theta(\log n)$ over the information-theoretic lower bound mentioned in the introduction. 
For the case $k=\Theta(n)$ Theorem~\ref{thm:nonadaptive} gives a lower bound of $\Omega(n \log n)$ guesses for every non-adaptive strategy, showing that adaptive strategies are indeed more powerful than non-adaptive ones in this regime (recall Theorem~\ref{thm:main}). 

To give an intuition for the correctness of Theorem~\ref{thm:nonadaptive}, note that for a uniformly chosen secret code $z \in [k]^n$, for any single fixed guess $x$ of a non-adaptive strategy the answer $\eq(z,x)$ is binomially distributed with parameters $n$ and $1/k$. 
That is, $\eq(z,x)$ will typically be within the interval $n/k \pm O(\sqrt{n/k})$.
Hence, we can typically encode the answer using $\log( O(\sqrt{n/k})) = O(\log(n/k))$ bits. Or, stated differently, our `information gain' is usually $O(\log(n/k))$ bits. Since the secret code `holds $n \log k$ bits of information', we would expect that we have to make $\Omega(n \log k / \log(n/k))$ guesses.

To turn this intuition into a formal proof, we recall the notion of entropy: 
For a discrete random variable $Z$ over a domain $D$, the \emph{entropy} of $Z$ is defined by
$H(Z) := - \sum_{z \in D} \Pr[Z = z] \log(\Pr[Z=z])$. 
Intuitively speaking, the entropy measures the amount of information that the random variable $Z$ carries. 
If $Z$ for example corresponds to a random coin toss with $\Pr[\text{`heads'}] = \Pr[\text{`tails'}] = 1/2$, then $Z$ carries 1 bit of information. 
However, a biased coin toss with $\Pr[\text{`heads'}] = 2/3$ carries less (roughly 0.918 bits of) information since we know that the outcome of heads is more likely.
In our proof we use the following properties of the entropy, which can easily be seen to hold for any two random variables $Z, Y$ over domains $D_Z, D_Y$.
\begin{itemize}
\item[(E1)] If $Z$ is determined by the outcome of $Y$, i.e., $Z = f(Y)$ for a deterministic function $f$, then we have
$H(Z) \le H(Y)$.
\item[(E2)] We have $H((Z, Y)) \le H(Z) + H(Y)$.
\end{itemize}
The inequality in (E2) holds with equality if and only if the two variables $Z$ and $Y$ are independent.

\begin{proof}[Proof of Theorem~\ref{thm:nonadaptive}] 
Below we show that there a time $s = \Omega \left( \frac{n \log k}{\max\{\log(n/k), 1\}} \right)$ such that
any deterministic strategy at any time earlier that $s$ determines less than half of the secret codes. Consequently, any deterministic strategy needs an expected time of at least $s/2$ to determine a secret chosen uniformly at random. Since any randomized strategy is a convex combination of deterministic ones, this latter statement also holds for randomized strategies.
 
Let $S = (x^1, x^2, \dots)$ denote a deterministic strategy of Codebreaker. 
We first show a lower bound on the number of guesses that are needed to identify at least \emph{half} of all possible secret codes. For $j=1, \ldots, k^n$, let $A_j=A_j(S)\subseteq [k]^n$ denote the set of codes that can be uniquely determined from the answers to the queries $x^1, \ldots, x^j$. Let $s$ be the smallest index for which $|A_s|\geq k^n/2$.


Consider a code $Z \in [k]^n$ sampled uniformly at random, and set $Y_i := \eq(Z, x^i)$, $1 \le i \le s$.
Moreover, let
\[
\tilde Z =
\begin{cases}
Z & \text{ if $Z \in A_s$,} \\
\text{`fail'} & \text{ if $Z \notin A_s$.} \\
\end{cases}
\]
By our definitions, the sequence $Y:=(Y_1, Y_2, \dots, Y_s)$ determines $\tilde Z$, and hence by (E1) we have
\begin{equation} \label{eq:entropy-tilde-Z-Y}
H(\tilde Z) \le H(Y) .
\end{equation}

Moreover, we have
\begin{align}
H(\tilde Z)
& = -\sum_{z \in A_s} \Pr[ \tilde Z = z ] \log(\Pr[ \tilde Z = z]) - \Pr[ \tilde Z = \text{`fail'} ] \log(\Pr[ \tilde Z = \text{`fail'} ] \notag \\
& \ge - \sum_{z \in A_s} \Pr[ Z = z ] \log(\Pr[ Z = z]) \notag \\
& = \frac{|A_s|}{k^n} \log(k^n) \notag \\
& \ge \tfrac12 n \log k.
 \label{eq:lb-entropy-Z-tilde}
\end{align}

\ignore{
We derive a lower bound on $H(\tilde Z)$. Let $\mathbf{1}_{A_s}$ denote the indicator random variable for the event that $Z \in A_s$. Since $\mathbf{1}_{A_s}$ is determined completely by $\tilde Z$ we have by (E4) that
\begin{align} 
H(\tilde Z) & 
= H( \textbf{1}_{A_s}) + H( \tilde Z | \textbf{1}_{A_s}) \notag\\
& \ge H(\tilde Z | \textbf{1}_{A_s}) \notag \\
& \ge - \sum_{z \in [k]^n} \Pr[ \tilde Z = z \wedge Z \in A_s ] \log( \Pr[ \tilde Z = z | Z \in A_s ] ) \notag \\
& = - \sum_{z \in A_s} \frac1{k^n} \log \Big( \frac1{|A_s|} \Big) \notag \\
& = \frac{|A_s|}{k^n} \log (|A_s|) \notag\\
& \ge \frac{1}{2} \log \Big( \frac{1}{2} k^n \Big) = \Omega(n \log k). \label{eq:lb-entropy-Z-tilde}
\end{align}
}

We now derive an upper bound on $H(Y)$. For every $i$, $Y_i$ is binomially distributed with parameters $n$ and $1/k$. Therefore, its entropy is (see, e.g.,~\cite{jacquet1999entropy})
\begin{equation*}
H(Y_i) = \frac12 \log \left( 2\pi e \frac nk \Big(1 - \frac1k \Big) \right) + \frac12 + O \Big(\frac1n \Big) = O(\max\{\log(n/k), 1\}).
\end{equation*}
We thus obtain
\begin{equation} \label{eq:ub-entropy-Y}
H(Y) \stackrel{\textup{(E2)}}\le \sum_{i=1}^{s} H(Y_i) = s H(Y_1) = s O(\max\{\log(n/k),1\}).
\end{equation}
Combining~\eqref{eq:entropy-tilde-Z-Y}, \eqref{eq:lb-entropy-Z-tilde}, and~\eqref{eq:ub-entropy-Y}, we obtain
\[
s = \Omega \left( \frac{n \log k}{\max \{ \log(n/k), 1\} } \right).
\]
Since, by definition of $s$, at least half of all secret codes in $[k]^n$ can only be identified by the strategy $S$ after at least $s$ guesses, it follows that the expected number of queries needed to identify a uniformly chosen secret code is at least $s/2$.
\end{proof}

\subsection{Upper Bound for Non-Adaptive Strategies}

We first show that for $k = \Theta(n)$ a random guessing strategy asymptotically achieves the lower bound from Theorem~\ref{thm:nonadaptive}. Afterwards, we will show that one can also derandomize this.



\begin{lemma} 
\label{lem:random-guessing}
For black-peg Mastermind with $n$ positions and $k=\Theta(n)$ colors, the random guessing strategy needs an expected number of $O(n \log n)$ queries to determine an arbitrary fixed code $z\in[k]^n$. Furthermore, for a large enough constant $C$, $Cn\log n$ queries suffice with probability $1-o(1)$.

\end{lemma}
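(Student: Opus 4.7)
The plan is to bound, for any fixed secret code $z$ and $t=Cn\log n$ independent uniform queries $x^1,\ldots,x^t$, the probability that some $z'\neq z$ remains consistent with all $t$ answers (i.e.\ satisfies $\eq(z,x^i)=\eq(z',x^i)$ for all $i$). A union bound over $z'$ will yield the high-probability statement, and the bound on the expectation then follows by a standard doubling argument.

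The first step is to bound the single-query collision probability. Fix $z'\in[k]^n$ at Hamming distance $d\ge 1$ from $z$, and let $p_d:=\Pr[\eq(z,x)=\eq(z',x)]$ for a uniform random $x$. I would observe that $\eq(z,x)-\eq(z',x)=\sum_{i:\,z_i\neq z_i'}X_i$, where the $X_i$ are i.i.d.\ on $\{-1,0,+1\}$ with $\Pr[X_i=\pm 1]=1/k$ and $\Pr[X_i=0]=1-2/k$. A Fourier-inversion (characteristic-function) calculation then yields $p_d\le e^{-2d/k}I_0(2d/k)$, where $I_0$ is the modified Bessel function of the first kind. Using the standard asymptotics of $I_0$, I would extract two regimes: $p_d\le e^{-cd/k}$ for $1\le d\le k$ (for some absolute constant $c>0$), and $p_d\le 1/2$ for $d\ge k$ (the latter follows from $e^{-2d/k}I_0(2d/k)\sim 1/\sqrt{4\pi d/k}<1/2$ once $d\ge k$).

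For the union bound, the number of codes at Hamming distance exactly $d$ from $z$ is $\binom{n}{d}(k-1)^d\le (enk/d)^d$, so by independence of the queries
\[\Pr[T>t]\le \sum_{d=1}^{n}\binom{n}{d}(k-1)^d\, p_d^t.\]
Substituting $t=Cn\log n$ and $k=\Theta(n)$: for $1\le d\le k$ one has $p_d^t\le n^{-\Omega(Cd)}$, so the $d$-th summand is at most $(O(n^{2-\Omega(C)})/d)^d$. Choosing $C$ sufficiently large makes the $d=1$ term $o(1)$, and the remaining summands decay geometrically in $d$, so the total contribution of this range is $o(1)$. For $k<d\le n$, $p_d^t\le 2^{-t}$ is super-polynomially small in $n\log n$ and easily absorbs the crude bound $\sum_{d>k}\binom{n}{d}(k-1)^d\le k^n$. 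Together these give $\Pr[T>Cn\log n]=o(1)$, which is the second assertion of the lemma.

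To deduce the expectation bound, fix $C$ so large that $\Pr[T>Cn\log n]\le 1/2$ and partition the random query sequence into rounds of $Cn\log n$ queries each. Since different rounds use independent queries, the events ``the queries of round $j$ alone identify $z$'' are independent across $j$ and each has probability at least $1/2$; hence $\Pr[T>jCn\log n]\le 2^{-j}$, and consequently $\mathbb{E}[T]\le 2Cn\log n=O(n\log n)$. I expect the main obstacle to be obtaining a sufficiently sharp bound on $p_d$ in the first step: a uniform estimate such as $p_d\le 1-2/k$ (all one gets from the $d=1$ case alone) would yield only an $O(n^2\log n)$ upper bound by the same union bound, whereas the exponential decay $p_d\le e^{-cd/k}$ is exactly what is needed to balance the binomial growth of the neighborhood size $\binom{n}{d}(k-1)^d$. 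The Fourier/Bessel estimate delivers the right bound in one line; the remainder of the argument is a routine geometric-series summation plus a standard doubling trick.
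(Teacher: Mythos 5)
Your argument is correct, but it takes a genuinely different route from the paper's proof of Lemma~\ref{lem:random-guessing}. The paper's proof is a short coupon-collector argument that exploits the structure of $0$-answers: since $k=\Theta(n)$, a uniform random query scores $\eq(z,x)=0$ with constant probability $(1-1/k)^n$, and each such answer certifies $z_i\neq x_i$ for every position $i$, eliminating a uniformly random wrong color at each of the $n$ positions; viewing this as $n$ parallel coupon collectors, each needing $k-1$ coupons, a union bound gives $\Pr[T\geq t]\leq nk\,e^{-t/k}$ and hence both claims. Your proof instead runs a union bound over all alternative codes $z'$ at Hamming distance $d$ and controls the collision probability $p_d=\Pr[\eq(z,x)=\eq(z',x)]$ via the characteristic function of the lazy $\pm1$ walk, obtaining $p_d\le e^{-2d/k}I_0(2d/k)$; all the individual steps check out (for $d\le k$ one gets $p_d\le e^{-d/(2k)}$ from $I_0(x)\le e^{x^2/4}$, and for $d\ge k$ monotonicity of $e^{-x}I_0(x)$ gives $p_d\le e^{-2}I_0(2)<1/2$ --- you should cite monotonicity rather than the asymptotic $\sim$ here, but this is cosmetic), and the geometric summation and doubling trick for the expectation are fine. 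Your approach is essentially the one the paper uses for the \emph{deterministic} non-adaptive upper bound, Theorem~\ref{thm:non-adaptive-lb}, where $p(d,k)$ is exactly your $p_d$ (bounded there by an explicit binomial sum rather than a Bessel function) and where a union bound over ``difference patterns'' is unavoidable because a single query set must work for every pair of secrets. What your method buys is generality: it adapts to all $k\le n$ and derandomizes immediately by the probabilistic method. What the paper's coupon-collector argument buys is brevity and an explicit identification mechanism (the shrinking candidate sets $C_i$), but it is tailored to the regime where $0$-answers occur with constant probability, i.e.\ $k=\Omega(n)$.
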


\begin{proof}
We can easily eliminate colors whenever we receive a $0$-answer. For every position $i \in [n]$ we need to eliminate $k-1$ potential colors. This can be seen as having $n$ parallel coupon collectors, each of which needs to collect $k-1$ coupons.

The probability that for a random guess we get an answer of $0$ is $(1-1/k)^n$, i.e., constant. Conditional on a $0$-answer, the color excluded at each position is sampled uniformly from all $k-1$ colors that are wrong at that particular position. Thus the probability that at least one of the $k-1$ wrong colors at one fixed position is \emph{not} eliminated by the first $t$ $0$-answers is bounded by $(k-1)(1-\frac{1}{k-1})^t\leq k e^{-t/k}$.

Let now $T$ denote the random variable that counts the number of $0$-answers needed to determine the secret code. By a union bound over all $n$ positions, we have $\Pr[T\geq t]\leq nk\e^{-t/k}=\Theta(n^2)\cdot e^{-\Theta(t/n)}$. It follows by routine calculations that $\mathbb{E}[T]=O(n\log n)$ and $\Pr[T\geq Cn\log n]=o(1)$ for $C$ large enough. As a random query returns a value of $0$ with constant probability, the same bounds also hold for the total number of queries needed.
\end{proof}


We now consider deterministic non-adaptive strategies to identify the secret code. \ifLongCites{\cite{Chvatal83}}{Chv\'atal~\cite{Chvatal83}} proved that the bound given in Theorem~\ref{thm:nonadaptive} is tight if $k \leq n^{1-\varepsilon}$, $\varepsilon>0$ a constant. 
Here we extend his argument to every $k \le n$. It essentially shows that a set of $O(\frac{n \log k}{\max \{ \log(n/k), 1\}})$ random guesses with high probability identifies every secret code. Our proof is based on the probabilistic method and is thus non-constructive. It remains an open question to find an explicit non-adaptive polynomial-time strategy that achieves this bound.

\begin{theorem} \label{thm:non-adaptive-lb}
There exists $n_0 \in \mathbb N$ and a constant $C > 0$ such that for every $n \ge n_0$ and $k \le n$ there exists a deterministic non-adaptive strategy for black-peg Mastermind with $n$ positions and $k$ colors that uses at most $C \frac{n \log k}{\max \{ \log(n/k), 1 \}}$ queries.
\end{theorem}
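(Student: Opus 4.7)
My plan is to extend Chv\'atal's probabilistic-method argument, originally proven for $k \le n^{1-\varepsilon}$ with any fixed $\varepsilon>0$, to cover all $k \le n$. Let $t := C n \log k / \max\{\log(n/k), 1\}$ with $C$ a sufficiently large absolute constant. I sample queries $x^1, \ldots, x^t$ independently and uniformly from $[k]^n$, and for each $z \in [k]^n$ define the answer vector $A(z) := (\eq(z, x^1), \ldots, \eq(z, x^t))$. The goal is to show that, with positive probability over the choice of queries, the map $z \mapsto A(z)$ is injective on $[k]^n$; a deterministic non-adaptive strategy with $t$ queries then exists by the probabilistic method.

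The first step is to obtain a sharp estimate of the confusion probability
\[
q(d) := \Pr\!\bigl[\eq(z, x) = \eq(z', x)\bigr]
\]
for any pair $z,z'$ at Hamming distance $d$ and a uniform random query $x$. Writing the difference $\eq(z, x) - \eq(z', x) = \sum_{i\in D} W_i$ as a sum over the $d$ differing positions, where the $W_i$ are i.i.d.\ on $\{-1,0,+1\}$ with $\Pr[W_i = \pm 1]=1/k$ and $\Pr[W_i = 0]=1-2/k$, the quantity $q(d)$ is the return-to-origin probability of a $d$-step lazy walk on $\mathbb{Z}$. A direct combinatorial expansion retaining only the ``no-step'' contribution gives $q(d) \le c_1 (1-2/k)^d$ for $d = O(k)$, and Fourier inversion / the local central limit theorem gives $q(d) \le c_2\sqrt{k/d}$ for $d = \Omega(k)$.

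Next I would apply the union bound over unordered pairs of distinct codes, grouped by Hamming distance:
\[
\Pr[\exists z\ne z' \colon A(z)=A(z')] \le \sum_{d=1}^{n} N_d\, q(d)^t, \qquad N_d := \tfrac12\, k^n \binom{n}{d}(k-1)^d.
\]
I would split this sum at $d\approx k$, substitute the corresponding bound on $q(d)$ in each range, and use the standard estimate $\binom{n}{d}\le (en/d)^d$. The problem then reduces to verifying $\log N_d + t \log q(d) < 0$ uniformly in $d\in[1,n]$; the specific form of $t$ is calibrated so that the exponents $-2dt/k$ in the small-$d$ regime and $-(t/2)\log(d/k)$ in the large-$d$ regime dominate $\log N_d = n\log k + O(d\log(nk/d))$ at the saddle point, producing a total sum that tends to $0$.

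The main obstacle is the small-$d$ regime, where the union bound is tightest. When $k$ is close to $n$, the confusion probability $q(1) = 1 - 2/k$ is close to $1$ and the number of pairs at distance $1$ grows like $N_1 = \Theta(k^{n+1} n)$, so even the leading-order estimate $q(d) \le (1-2/k)^d$ is barely sufficient. Pushing the argument through for $k$ comparable to $n$---the regime where the $\max\{\log(n/k), 1\}$ in the denominator of $t$ activates its ``$1$'' branch---is the genuinely new content beyond Chv\'atal's original range $k \le n^{1-\varepsilon}$: the additional logarithmic factor supplied by that ``$1$'' is exactly what is needed to absorb the prefactor $k^n$ in $N_d$ at the small-$d$ boundary, and getting the constant $C$ to be absolute (independent of $n$ and $k$) will require a careful accounting of the lower-order terms in both the expansion of $q(d)$ and the estimate of $N_d$.
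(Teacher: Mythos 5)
Your overall plan (independent uniform random queries plus a first-moment argument, with the confusion probability of a pair at Hamming distance $d$ analyzed as the return-to-origin probability of a $d$-step lazy walk) is the right one, and your two-regime estimates on $q(d)$ --- decay of $\ln(1/q(d))$ at rate $\Theta(d/k)$ for $d=O(k)$, and $q(d)=O(\sqrt{k/d})$ for $d=\Omega(k)$ --- match what the paper proves for its quantity $p(d,k)$. But your union bound has a fatal flaw: you sum over unordered \emph{pairs of codes}, so your $N_d=\tfrac12 k^n\binom{n}{d}(k-1)^d$ carries the prefactor $k^n$. Already at $d=1$ this forces $t\ln(1/q(1))\gtrsim n\ln k$ with $\ln(1/q(1))=\Theta(1/k)$, i.e.\ $t=\Omega(nk\log k)$ --- off by a factor of about $k$ from the claimed bound, and no choice of the absolute constant $C$ can repair this. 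Your closing paragraph asserts that the extra logarithmic factor supplied by the ``$1$'' branch of $\max\{\log(n/k),1\}$ ``is exactly what is needed to absorb the prefactor $k^n$''; it is not: that branch buys at most a factor $\log(n/k)\le\log n$, nowhere near the factor $k$ that is missing.

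The missing idea (used by Chv\'atal and by the paper) is that whether a query distinguishes $z$ from $z'$ depends only on their \emph{difference pattern}: the set $I=\{i: z_i\ne z_i'\}$ together with the two color lists $(z_i)_{i\in I}$ and $(z_i')_{i\in I}$. One therefore union-bounds over difference patterns rather than over pairs of codes; there are only $\binom{n}{d}\bigl(k(k-1)\bigr)^d$ patterns of size $d$, so the prefactor $k^n$ is replaced by $k^{2d}$. At small $d$ the requirement becomes roughly $t\cdot d/k\gtrsim d\log(nk^2)$, i.e.\ $t\gtrsim k\log n$, which the claimed $t$ comfortably satisfies for all $k\le n$. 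With this replacement your estimates on $q(d)$ and your small-$d$/large-$d$ split go through essentially as in the paper's proof; the remaining work is the case analysis ($k\ge cn$ versus $k<cn$, and within the latter $d$ below or above roughly $\frac{n\log k}{\log(n/k)\log n}$) needed to extract an absolute constant $C$.
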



\begin{proof}
The idea is to use a probabilistic method type of argument, i.e., we show that, for an appropriately chosen constant $C > 0$ and $n$ large enough, a set of $N = C \frac{n \log k}{\max \{ \log(n/k), 1 \}}$ random guesses with positive probability identifies every possible secret code. (In fact, we will show that such a set of queries has this property with high probability.)

If a set $X=\{x^{(i)} \mid i \in N\}$ of queries distinguishes any two possible secret codes $z, z'$, then there must exist for each such pair $z\neq z'$ a query $x \in X$ with $\eq(z,x) \neq \eq(z',x)$. 
In particular we must have $| \{ i \in I(z,z') : x_i = z_i \} | \neq | \{ i \in I(z,z') : x_i = z'_i \} |$ for $I(z,z') := \{ i \in [n] : z_i \neq z'_i \}$.
Based on this observation we define (similar to~\cite{Chvatal83}) a \emph{difference pattern} to be a set of indices $I \subseteq [n]$ together with two lists of colors $(c_i)_{i \in I}, (c'_i)_{i \in I}$ such that $c_i \neq c'_i$ for every $i \in I$. 
For every two distinct secret codes $z, z' \in [k]^n$ we define the difference pattern corresponding to $z$ and $z'$ to be the set $I(z,z') := \{ i \in [n] : z_i \neq z'_i \}$ together with the lists $(z_i)_{i \in I}$ and $(z'_i)_{i \in I}$.
We say that a query $x \in [k]^n$ \emph{splits} a difference pattern given by $I$, $(c_i)_{i \in I}$, and $(c'_i)_{i \in I}$ if
\[
| \{ i \in I : x_i = c_i \} | \neq | \{ i \in I : x_i = c'_i \} |.
\]
It is now easy to see that if a set of $N$ queries has the property that every possible difference pattern is split by at least one query from that set, then these $N$ queries together with the answers deterministically identify Codebreaker's secret code.

In the following we show that a set of $N = C \frac{n \log k}{\max \{ \log(n/k), 1 \}}$ random queries with probability at least $1 - 1/n$ has the property that it splits every difference pattern.



The \emph{size} of a difference pattern $I$, $(c_i)_{i \in I}$, $(c'_i)_{i \in I}$ is the cardinality of $I$.
Note that for fixed $k$, the probability that a particular difference pattern is \emph{not} split by a randomly chosen query only depends on its size. Let $p(d,k)$ denote this probability for a difference pattern of size $d$. 
The probability that there exists a difference pattern that is not split by any of the $N$ random queries is at most
\[
\sum_{d=1}^{n} \binom nd (k(k-1))^d (p(d,k))^N .
\]
In order to show that this probability is at most $1/n$ it thus suffices to prove that for every $d \in [n]$ we have
\begin{equation} \label{eq:non-adaptive-ultimate-goal}
\binom nd (k(k-1))^d (p(d,k))^N < n^{-2} .
\end{equation}
We first take a closer look at $p(d,k)$. 
Observe that 
if a query $x$ does \emph{not} split a fixed difference pattern $I$, $(c_i)_{i \in I}$, $(c'_i)_{i \in I}$, then $x_i$ must agree with $c_i$ on exactly half of the positions in $I':=\{ i \in I \mid x_i \in \{c_i,c'_i\} \}$, and it must agree with $c'_i$ on the other positions in $I'$. In particular, the size of $I'$ must be even. More precisely, we have
\begin{align*}
p(d,k) & = \sum_{i=0}^{\lfloor d/2 \rfloor} \binom d{2i} \binom{2i}i \left( \frac 1k \right)^{2i} \left(1 - \frac2k \right)^{d-2i} \\
& = \sum_{i=0}^{\lfloor d/2 \rfloor} \binom d{2i} \left( \frac 2k \right)^{2i} \left(1 - \frac2k \right)^{d-2i} \binom{2i}{i} 2^{-2i} .
\end{align*}
Note that $\binom{2i}i 2^{-2i} \le 1/2$ for every $i \ge 1$, and $1-x \leq e^{-x}$ for all $x\in \R$. Hence,
\begin{align*}
p(d,k) & \le \left( 1 - \frac 2k \right)^d + \frac12 \sum_{j=1}^{d} \binom dj \left( \frac2k \right)^{j} \left( 1 - \frac2k \right)^{d-j} \\
& = 1 - \frac12 \left(1 - \left(1 - \frac2k \right)^d \right) \\
& \le \exp \left( - \frac12 \left(1 - e^{- \frac{2d}{k}} \right) \right).
\end{align*}
It follows that
\begin{equation} \label{eq:non-adaptive-pdk}
\ln \frac 1{p(d,k)} \ge \frac12 \left( 1 - e^{- \frac{2d}{k}} \right) .
\end{equation}
We now split the proof into two cases, $k \ge cn$ and $k < cn$ where $c$ is a sufficiently small constant. (We determine $c$ at the end of the proof.) 

\textbf{Case 1.} $k \ge c n$. Observe that in this case $\log(n/k) \le \log(1/c)$ and $\log k = \log n + \Theta(1)$. 
Hence, the bound claimed in Theorem~\ref{thm:non-adaptive-lb} evaluates to $O(n \log n)$ in this case. 
It thus suffices to show that there exists a constant $C > 0$ such that $N = C n \log n$ queries already identify every secret code with high probability.

We show $n^{5d}(p(d,k))^N < 1$ for every $d \in [n]$, which clearly implies~\eqref{eq:non-adaptive-ultimate-goal}. In fact, we show the equivalent inequality
\begin{equation} \label{eq:non-adaptive-case1}
\frac{N}{5d} \ln \frac{1}{p(d,k)} > \ln n.
\end{equation}
Using~\eqref{eq:non-adaptive-pdk} we obtain
\begin{equation} \label{eq:non-adaptive-case1-step1}
\frac{N}{5d} \ln \frac{1}{p(d,k)} \ge \frac{N}{10} \frac{1-e^{-\frac{2d}{k}}}{d}.
\end{equation}
Using that $d \mapsto (1-e^{-2d/k})/d$ is a decreasing function in $d$ we can continue with
\[
\frac{N}{5d} \ln \frac{1}{p(d,k)} \ge \frac{N}{10} \frac{(1-e^{-\frac{2n}{k}})}{n},
\]
which is clearly larger than $\ln n$ for any $N > \frac{10}{(1-e^{-2}) \log e} n \log n$. Hence for such $N$ we have~\eqref{eq:non-adaptive-case1} which settles this case.

\textbf{Case 2.} $k < c n$. In this case we need to be more careful in our analysis since in our claimed bound the factor $\log(n/k)$ might be large and the factor $\log k$ might be substantially smaller than $\log n$. 

In what follows, we regard only the case $k \ge 3$; the case $k=2$ has already been solved, cf.~\cite{Erd63}.

We first consider difference patterns of size $d \le \frac{n \log k}{\log(n/k) \log n}$. As in Case 1 we show that~\eqref{eq:non-adaptive-case1} holds for these patterns. Observe that~\eqref{eq:non-adaptive-case1-step1} holds again in this case. Since
the function $d \mapsto (1-e^{-2d/k})/d$ is 
decreasing in $d$ and since $d \le \frac{n \log k}{\log(n/k) \log n}$ we obtain
\begin{equation} \label{eq:non-adaptive-continue-bla}
\frac{N}{5d} \ln \frac{1}{p(d,k)} \ge \frac{N}{10} \frac{(1-e^{-\frac{2n \log k}{k \log(n/k) \log n}}) \log(n/k) \log n}{n \log k}.
\end{equation}
Next we bound the exponent $\frac{n \log k}{k \log(n/k) \log n}$ in the previous expression. Note that the derivative of $\frac{n \log k}{k \log(n/k) \log n}$ with respect to $k$ is
\begin{equation} \label{eq:non-adaptive-exponent}
\frac{ n ( \log n - \ln(2) \log k \log (n/k))}{\ln(2) k^2 \log n \log^2(n/k)}.
\end{equation}
We now show that this expression is 
less than 0 for 
$3 \le k \le n/4$. 
Indeed, observe that by setting $g(k) = \ln(2) \log k \log(n/k)$ we have for $n$ large enough that
$g(3) = \ln(2) \log(3) \log (n/3) > 1.09 \log(n) - 3.3 > \log n$
and 
$g(n/4) = 2\ln(2) \log(n/4)>\log n$. 
Moreover, observe that
\[
g'(k) = \frac{\log n - \log(k^2)}{k}.
\]
From this one easily sees that the function $g$ has a local maximum at $k=\sqrt{n}$ as its only extremal point in the interval in the interval $3 \le k \le n/4$.
Hence $g(k) > \log n$ for every $3 \le k \le n/4$ and thus \eqref{eq:non-adaptive-exponent} is negative.

Hence, $\frac{n \log k}{k \log(n/k) \log n}$ is a decreasing function in $k$ and we have $\frac{n \log k}{k \log(n/k) \log n} \ge \frac{1}{c \log(1/c)} \left(1 + \frac{\log c}{\log n} \right) \ge 1$ for $n$ large enough. With this we can continue~\eqref{eq:non-adaptive-continue-bla} with
\[
\frac{N}{5d} \ln \frac{1}{p(d,k)} \ge \frac{1-e^{-2}}{10} \frac{N \log(n/k) \log n}{n \log k},
\]
which is certainly larger than $\ln n$ for any $N \ge \frac{10}{(1-e^{-2}) \log e} \frac{n \log k}{\log(n/k)}$. This settles the case $k<cn$ for all $d \le \frac{n \log k}{\log(n/k) \log n}$.

In the remainder of this proof we consider the case $k<cn$ and $d \ge \frac{n \log k}{\log(n/k) \log n}$. For such $d$ we establish the inequality $2^n k^{2n} (p(d,k))^N < n^{-2}$ which clearly implies~\eqref{eq:non-adaptive-ultimate-goal}. As done previously, we actually show the equivalent inequality
\begin{equation} \label{eq:non-adaptive-case2b}
N \log \frac1{p(d,k)} > 2 n \log k + n + 2 \log n.
\end{equation}
First observe that $\binom{2i}{i} 2^{-2i} \le 1/\sqrt{i}$ for every $i \ge 1$. We denote by $\mathcal Bin(n,p)$ a binomially dsitributed random variable with parameters $n$ and $p$. With this, we obtain 
\begin{align*}
p(d,k) & \le \sum_{j = 0}^{\lfloor d/k \rfloor} \binom dj \left( \frac 2k \right)^j \left( 1 - \frac 2k \right)^{d-j} + \left( \frac dk \right)^{-1/2} \sum_{j = \lfloor d/k \rfloor + 1}^{n} \binom dj \left( \frac 2k \right)^j \left( 1 - \frac 2k \right)^{d-j} \\
& = \Pr \left[ \mathcal Bin \left( d,\frac 2k \right) \le \frac{d}{k} \right] + \left( \frac dk \right)^{-1/2} \sum_{j = \lfloor d/k \rfloor + 1}^{n} \binom dj \left( \frac 2k \right)^j \left( 1 - \frac 2k \right)^{d-j} .
\end{align*}
Using the Chernoff bound $\Pr[ \mathcal Bin(n,p) \le (1 - \delta) np] \le \left( \frac{e^{-\delta}}{(1-\delta)^{1-\delta}}\right)^{np}$ we obtain
\[
\Pr \left[ \mathcal Bin \left( d,\frac 2k \right) \le \frac{d}{k} \right] \le \left( \frac{e^{-1/2}}{\left( 1/2 \right)^{1/2}} \right)^{2d/k} = \left( \frac 2e \right)^{d/k}.
\]
Hence, we have
\[
p(d,k) \le \left( \frac 2e \right)^{d/k} + \left( \frac dk \right)^{-1/2} .
\]
It is not hard to see that the function
\[
f(k) = \frac{\left( \frac2e \right)^{d/k}}{\left( \frac dk \right)^{-1/2}}
\]
attains its maximum at $k = 2(1 - \ln 2)d$ and that $f(2(1-\ln 2)d) \le 1$. Hence we have
\[
p(d,k) \le 2 \left( \frac dk \right)^{-1/2} = \left( \frac{d}{4k} \right)^{-1/2}.
\]
With this we obtain
\begin{align*}
N \log \frac 1{p(d,k)} & \ge \frac N2 ( \log d - \log k - 2) \\
& \ge \frac N2 ( \log n + \log \log k - \log \log (n/k) - \log \log n - \log k - 2) \\
& = \frac N2 ( \log(n/k) - \log \log (n/k) - \log \left( \frac{\log n}{\log k} \right) - 2 ) \\
& \ge \frac N4 \log(n/k)
\end{align*}
where the last inequality follows from $\frac12 \log(n/k) - \log \log (n/k) - \log( \frac{\log n}{\log k} ) - 2 \ge 0$ for every $k \le cn$ for a sufficiently small constant $c > 0$ and $n$ large enough. (In fact, this step imposes the most restrictive bound on $c$, i.e., any $c > 0$ that, for $n$ large enough, satisfies $\frac12 \log(1/c) - \log \log (1/c) - \log( 1 - \frac{\log c}{\log cn}) - 2 \ge 0$ is appropriate for our proof.) Clearly $\frac N4 \log(n/k)$ is larger than $2n \log k + n + 2 \log n$ for any $N > 16 \frac{n \log k}{\log(n/k)}$ and $n$ large enough. This implies~\eqref{eq:non-adaptive-case2b} and thus settles this last case.
\end{proof}

\subsection*{Acknowledgments.} 
Carola Doerr is a recipient of the Google Europe Fellowship in Randomized Algorithms. This research is supported in part by this Google Fellowship. Her work also partially supported by a Feodor Lynen Research Fellowship for Postdoctoral Researchers of the Alexander von Humboldt Foundation and by the Agence Nationale de la Recherche under the project
ANR-09-JCJC-0067-01.

}


\end{document}